\newtheorem{definition}{Definition}
\newtheorem{Assumption}{Assumption}
\newtheorem{lemma}{Lemma}
\newtheorem{proposition}{Proposition}
\newenvironment{proof}[1][Proof]{\noindent \textit{#1.} }{}
\begin{document}

\begin{center}
{\large Testable Implications of Multiple Equilibria in \\ Discrete Games with
Correlated Types}\footnote{de Paula would like to gratefully acknowledge financial support
from the Economic
and Social Research Council (ESRC) under ESRC Centre for Microdata Methods and Practice (CeMMAP), grant number RES589-28-0001. We thank Kristof Kutasi for excellent research assistance. 
Correspondence: \texttt{xt9@rice.edu, a.paula@ucl.ac.uk}.} \bigskip 

\'{A}ureo de Paula

University College London, CeMMAP and IFS\bigskip

Xun Tang

Rice University\bigskip

\today%
\bigskip

Abstract\bigskip
\end{center}

We study testable implications of multiple equilibria in discrete games  with incomplete information. 
Unlike \cite{dePaulaTang2012}, we allow the players' private signals to be correlated. 
In static games, we leverage independence of private types across games whose equilibrium selection is correlated. 
In dynamic games with serially correlated discrete unobserved heterogeneity, our testable implication builds on the fact that the distribution of a sequence of choices and states are mixtures over equilibria and unobserved heterogeneity. 
The number of mixture components is a known function of the length of the sequence as well as the cardinality of equilibria and unobserved heterogeneity support. 
In both static and dynamic cases, these testable implications are implementable using existing statistical tools. \pagebreak

\section{Introduction}

In many social-economic contexts, individuals or institutions interact strategically in response to private incentives. Empirical studies typically model such interaction via static or dynamic games with incomplete information, and exploit the equilibrium implications to infer the incentives from observed states and actions. Examples include location choices in the video retail industry in \cite{seim2006}, timing of commercials by radio stations in \cite{sweeting2009}, choices of effort by students and teachers in classrooms in \cite{toddWolpin2018}, market entry and exit of grocery stores in \cite{grieco2014}, and the dynamic demand and supply in shipbuilding industry in \cite{kalouptsidi2018}. Popular methods for estimating these games require a reduced-form first step that estimates the conditional choice probabilities (CCPs) in equilibrium. Subsequent steps use model-implied links between these CCPs and structural parameters to infer the latter. See \cite{aradillas2010}, \cite{aradillas2019}, and \cite{bajari_et_al2010} for static games; and see \cite{aguirregabiriaMira2007}, \cite{pesendorferDengler2008} and \cite{arcidiaconoMiller2011} for dynamic games.

Given a set of model parameters, games with private types generally admit multiple equilibria. While multiplicity does not necessarily preclude model identification\footnote{As \cite{sweeting2009} notes, multiple equilibria in a sample can sometimes be used to aid the identification and estimation of players' payoffs in parametric likelihood models.} and estimation protocols exist for set identified models, multiple equilibria in a data-generating process (DGP) pose challenges to estimation in both static and dynamic settings. A researcher using observational data typically has no knowledge a priori as to whether a sample is generated from a single equilibrium. With multiple equilibria in the data, the reduced-form estimator in the first step in CCP-based methods, for example, converges to a mixture of CCPs from each equilibrium represented in the sample. This mixture does not conform to the structural links used in sequential estimation.\footnote{Formally, a mixture of several CCP vectors, each indexed by a different equilibrium, does not generally satisfy structural equations that characterise a single equilibrium.} Therefore, detecting multiple equilibria in the sample can be an essential step for valid inference of player incentives in these settings.

In the first part of this paper, we propose a way to test multiple Bayesian Nash equilibria (BNE) in static Bayesian games where players' private types are correlated, for example, through game-level unobserved heterogeneity that is known to all players but not reported in the data. \cite{dePaulaTang2012} proposed a simple test for multiple equilibria in static Bayesian games where the private types are independent conditional on observed states. That test exploited a simple equilibrium implication. With a single BNE in data, players' strategic choices are independent conditional on covariates when we pool observations in the sample. On the other hand, with multiple equilibria in data, those choices are correlated due to their co-movement across different equilibria in the sample. The sign of such correlation conforms with the sign of interaction effects. This approach in \cite{dePaulaTang2012} does not apply when private types are correlated conditioning on observed covariates. In this case, strategic choices are not independent even when there is a single equilibrium.

We introduce a new method to detect multiple equilibria in static Bayesian games. This requires a working assumption that an empiricist can group the games in a sample into pairs or clusters within which equilibrium selection, if there are multiple equilibria, is correlated. For example, consider married couples making joint retirement decisions in a context of simultaneous Bayesian games. Our method can be applied if households with similar demographics and social-economic status, or located in the same geographic area, tend to be correlated in how they adopt strategies from multiple equilibria. The private types are drawn independently across games in the same pair or cluster, but are correlated within each game. Thus, strategic choices of players from two different games in the same cluster are independent under the null of single equilibrium, but are generically correlated under the alternative due to correlated equilibrium selection. Intuitively, permuting players across games within the same ``cluster'' allows one to emulate the ideas in \cite{dePaulaTang2012}. This implication lends itself to a simple test for multiple equilibria, which is analogous to inference of covariate relevance in nonparametric regressions.

\cite{aguirregabiriaMira2019} provided identification results in static Bayesian games with multiple equilibria and discrete unobserved heterogeneity. Their strategy requires the private types be independent from discrete unobserved heterogeneity and observed covariates, and rank conditions on equilibrium CCPs.\footnote{See Assumption 1(B) and condition (d) in Proposition 1 in \cite{aguirregabiriaMira2019}.} While identifying the full model, \cite{aguirregabiriaMira2019} also introduced a novel idea for detecting multiple equilibria. The idea is to compare the ex post payoffs of players recovered from each profile of CCPs as indexed by unobserved heterogeneity and equilibria. To implement this, one needs to use exclusion restrictions in the ex post payoffs, and know the actual distribution of private types.

Our approach differs from that idea in \cite{aguirregabiriaMira2019} in several ways. First, we separate the task of detecting multiple equilibria from that of identifying the full model. Thus, we do not require conditions for identifying the full model, such as exclusion restrictions in ex post payoffs, rank conditions on CCPs, and assumed knowledge of private type distribution. Second, we allow the correlation between player choices to arise from continuous game-level unobserved heterogeneity that can be correlated with other idiosyncratic errors as well. Lastly, inference of single equilibrium using our approach is analogous to tests of covariate relevance in nonparametric regressions, and can be implemented using existing procedures such as \cite{racine_et_al2006}. It does not require sequential steps of estimating and then comparing ex post payoffs from different profiles of CCPs. We elaborate on these substantial differences in Section 2.4.

In the second part of the paper, we propose a new test for multiple Markov Perfect Equilibria (MPE) in dynamic games where private types are correlated due to Markovian, game-level discrete unobserved heterogeneity (DUH) in each period.\footnote{A Markov Perfect Equilibrium is a profile of time-homogeneous pure strategies that map a player's information in each single time period to a choice. Following convention in the literature, we maintain that players do not switch between equilibria within the process of a dynamic game.} In this case, a player's single-period payoff depends on a time-varying DUH, through which players' private information is correlated. If the DUH is drawn independently in each period with no serial correlation, then the idea we introduce for static games above is applicable, with each dynamic game in the sample serving as a ``cluster'' itself. That is, we can test the hypothesis of multiple MPE by checking whether the choices made by two players in two different time periods are correlated. However, this does not work with a serially correlated DUH, which causes correlation between those choices even under the null of single equilibrium.

To meet this challenge, we propose a new way to identify the cardinality of equilibria in dynamic games with private types and serially correlated DUH. Our method exploits the feedbacks of earlier states and choices on future outcomes over different time horizons in an equilibrium Markov process. The idea is to extract information about equilibrium selection and DUH from how these feedbacks vary with the length of history examined. We partition and pair the history of outcomes so that their joint distribution takes the form of a finite mixture. The number of components in the finite mixture is identifiable by existing methods such as \cite{kasaharaShimotsu2014}. More importantly, we show how the number of components in this mixture is determined by the length of history $T$ as well as the cardinality of equilibria and the support of DUH.

To the best of our knowledge, our method uses a new source of variation that has not been exploited in the literature for detecting multiple equilibria in dynamic games. Using these results, one can conduct inference for the cardinality of equilibria in a sample, using standard rank tests such as \cite{kleibergenPaap2006}. This dispenses with any need for identifying a full MPE model and comparing estimates of structural elements, such as an alternative, sequential approach mentioned in \cite{luo_et_al2019}.

The rest of the paper is organized as follows. Section 2 presents the testable implications in static Bayesian games, and discusses how a test can be constructed as a test of covariate relevance in nonparametric regressions. We also illustrate the testable implications and the necessary assumptions using numerical examples. Section 3 does the same for dynamic games with private types, where the private information is correlated through serially correlated DUH.  We further discuss the related literature at the end of each section.

\bigskip

\noindent \textbf{Notation}. We use uppercase letters to denote random
variables and vectors, and use lowercase letters to denote their realized
values. We use calligraphic letters such as $\mathcal{X}$ for the support of
random variables or vectors, and let $\#\mathcal{X}$ denote its cardinality.
For a generic random vector $R\equiv (R_{k}:k=1,...,K)$, let $R_{-k}\equiv
(R_{k'}:k^{\prime }=1,2,...,k-1,k+1,...,K)$. For a pair of sub-vectors $R_{1},R_{2}$ in $R=(R_{1},R_{2})$, let $f_{R_{1}}$, $F_{R_{1}}$
denote the marginal density and distribution of a subvector, and let $%
f_{R_{2}|R_{1}}$, $F_{R_{2}|R_{1}}$\ denote the conditional density and
distribution. We write $f_{R_{2}|R_{1}=r_{1}}$ and $F_{R_{2}|R_{1}=r_{1}}$,
or simply $f(r_{2}|r_{1})$ and $F(r_{2}|r_{1})$, if there is need to be
specific about the realized value conditioned on.

\section{Test of Multiple Monotone Equilibria in Static Games}

\label{sec:static}

\subsection{The model}

Consider a simultaneous binary game with private information between a set
of players $\mathcal{I}$. Each player $i$ chooses $D_{i}\in \{1,0\}$. Let $X$
denote states that are common knowledge among players and reported in the
data. For each $i$, let $\epsilon _{i}\in \mathbb{R}$\ denote $i$'s private
information, a.k.a. \textquotedblleft type\textquotedblright ,
\textquotedblleft shock\textquotedblright\ or \textquotedblleft
signal\textquotedblright, which is unknown to other players and not
reported in the data. The ex post payoff for $i$ from $D_{i}=0$ is normalized to zero
while that from $D_{i}=1$ is:%
\begin{equation}
v_{i}(D_{-i},X)+\epsilon _{i},  \label{ex_post_payoff}
\end{equation}%
where $v_{i}$ is real-valued single index.  Additive separability is used later to establish an order among the equilibria for the game (see Lemma 1), but might be relaxed insofar as this ordering is preserved.

\begin{Assumption}
\label{assum:model_regularity} (i) Conditional on $X$, $\epsilon \equiv
(\epsilon _{i}:i\in \mathcal{I})$ is continuously distributed with bounded,
atomless density with respect to Lebesgue measure. (ii) The index $v_{i}$ is
bounded for all $i$.
\end{Assumption}

Assumption \ref{assum:model_regularity}\ guarantees the existence of a Bayes-Nash Equilibrium by Theorem 1 in \cite{athey2001} and allows for correlation between
private signals conditional on $X$. The test in \cite{dePaulaTang2012} cannot be applied in this case, because players' choices are correlated even
under the null of a single equilibrium in the data.

Conditional on any $x$, a \textit{pure strategy} for a player $i$ is a
function $\varepsilon \in \mathbb{R} \mapsto s_i(x,\varepsilon)\in \{0,1\}$. Let $U_{i}(x,\varepsilon _{i};s_{-i})$ denote the difference in player $i$'s
expected payoffs from choosing $1$ and $0$ when other players adopt pure
strategies $s_{-i}(x,\cdot)\equiv (s_{j}(x,\cdot):j\not=i)$. That is,%
\begin{equation*}
U_{i}(x,\varepsilon _{i};s_{-i})\equiv \int v_{i}(s_{-i}(x,\varepsilon
_{-i}),x)dF(\varepsilon _{-i}|\varepsilon _{i},x)+\varepsilon _{i},
\end{equation*}%
where $s_{-i}(x,\varepsilon _{-i})$ is shorthand for $(s_{j}(x,\varepsilon
_{j}):j\not=i)$. A \textit{pure-strategy Bayesian Nash equilibrium}
(p.s.BNE) in a game with states $x$ is a profile $(s_{i}^{\ast }(x,.):i\in 
\mathcal{I})$ such that%
\begin{equation}
s_{i}^{\ast }(x,\varepsilon _{i})=\mathbf{1}\left\{ U_{i}(x,\varepsilon
_{i};s_{-i}^{\ast })\geq 0\right\} \text{ for all }i\text{, }\varepsilon _{i}%
\text{.}  \label{BNE}
\end{equation}%
A \textit{p.s.BNE} conditional on states $x$ is \textit{monotone} if $%
s_{i}^{\ast }(x,.)$ is non-decreasing over the support of $\epsilon _{i}$
for each $i$. A monotone p.s.BNE conditional on $x$ is summarized by a
vector of thresholds $t(x)\equiv (t_{i}(x):i\in \mathcal{I})\in \mathbb{R}%
^{\#\mathcal{I}}$ such that $s_{i}^{\ast }(x,\varepsilon
_{i})=1\{\varepsilon _{i}\geq t_{i}(x)\}$. For brevity, we use the term
\textquotedblleft equilibrium\textquotedblright\ to refer to monotone
p.s.BNE when there is no ambiguity from the context below.

\begin{Assumption}
\label{assum:single_crossing} For any $x$ and $\varepsilon _{i}^{\prime
}>\varepsilon _{i}$, $U_{i}(x,\varepsilon _{i};s_{-i})\geq (>) 0$ implies $%
U_{i}(x,\varepsilon _{i}^{\prime };s_{-i})\geq (>) 0\ $whenever $s_{-i}$
consists of monotone strategies.
\end{Assumption}

Assumption \ref{assum:single_crossing}\ is the single-crossing condition in
games of incomplete information\ from \cite{athey2001} and holds automatically if private types are independent. Under Assumption \ref%
{assum:model_regularity}\ and \ref{assum:single_crossing}, Theorem 1 in
\cite{athey2001} implies that our game admits monotone p.s.BNE at any $x$. Let $%
\mathcal{T}(x)$ denote the complete set of threshold vectors that characterize an
equilibrium given $x$. That is, $\mathcal{T}(x)$ is the set of all vectors $%
t(x)$ such that $s_{i}^{\ast }(x,\varepsilon _{i})=1\{\varepsilon _{i}\geq
t_{i}(x)\}$, $i\in \mathcal{I}$ satisfies (\ref{BNE}).

\subsection{Overview of our method in a two-player game}

\label{subsec:overview}

A typical sample reports the states and choices from many games. The
identities of players may vary across the games; it is only maintained that
players' preference and information are drawn from the same data-generating
process (DGP) across these games. Player indexes such as $i$ and $j$ have
common meaning across the games in that the preference and information for
players with the same index from different markets are drawn from the same
distribution. For example, $i$ and $j$ could refer to Burger King and McDonald's across several markets, or husbands and wives across several households.  Our goal is to infer whether outcomes across these games are
generated from more than one equilibria.

To fix ideas, consider a model with two players $\mathcal{I}\equiv \{i,j\}$
across games that share the same realization of states $x$, which we
suppress in the notation $v_{i}(D_{-i})$ and throughout this subsection (except for the more general definition of adjancency) for simplicity. Assume players'
actions impose the same type of externalities on each other, that is, $%
sign[v_{i}(1)-v_{i}(0)]=$ $sign[v_{j}(1)-v_{j}(0)]$. (This allows for the possibility of multiple equilibria in this simple game.) $\mathcal{T}\subset 
\mathbb{R}^{\#\mathcal{I}}$ denotes the complete set of equilibria, and let $T\in 
\mathcal{T}$ denote the equilibrium selected in a game in the data.

For the rest of Section \ref{subsec:overview}, we elaborate on the idea
using a two-player model with $\mathcal{I}\equiv \{i,j\}$. First, we show this game admits a total order over the set of equilibria $\mathcal{T}$.

\begin{lemma}
\label{lm:total_order} Suppose Assumptions \ref{assum:model_regularity} and \ref%
{assum:single_crossing} hold, and $\#\mathcal{T}<\infty $.
Then $sign[(t_{i}-t_{i}^{\prime })(t_{j}-t_{j}^{\prime })]=$ $%
sign(v_{i}(1)-v_{i}(0))$ for all $t\not=t^{\prime }$ in $\mathcal{T}$.
\end{lemma}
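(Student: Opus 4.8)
The plan is to exploit the characterization of a monotone equilibrium by its threshold vector $t=(t_i,t_j)$ together with the single-crossing structure of $U_i$ and $U_j$. Fix two distinct equilibria $t\neq t'$ in $\mathcal{T}$. Since each equilibrium is a vector in $\mathbb{R}^2$ and they are distinct, at least one coordinate differs; I would first argue that in fact \emph{both} coordinates must differ. Indeed, suppose $t_i=t_i'$ but $t_j\neq t_j'$. Player $i$'s threshold is the unique $\varepsilon_i$ solving $U_i(\varepsilon_i;s_{-i})=0$, and $U_i(\varepsilon_i;s_{-i})=\int v_i(s_j(\varepsilon_j))\,dF(\varepsilon_j\mid \varepsilon_i)+\varepsilon_i$. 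When $j$ switches from the threshold $t_j$ to $t_j'$, the set on which $s_j=1$ changes (on a set of positive measure, since $\epsilon$ has atomless density by Assumption \ref{assum:model_regularity}), so the integral term changes strictly whenever $v_i(1)\neq v_i(0)$; hence $i$'s best-response threshold must move, contradicting $t_i=t_i'$. The case $v_i(1)=v_i(0)$ (and symmetrically $v_j(1)=v_j(0)$) is degenerate: then the sign on the right-hand side is zero, $i$'s action does not depend on $j$'s, and one checks the equilibrium is unique, so the claim holds vacuously. So from now on assume $v_i(1)\neq v_i(0)$, $v_j(1)\neq v_j(0)$, and $t_i\neq t_i'$, $t_j\neq t_j'$.

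Next I would pin down the sign relationship. Without loss of generality order the two equilibria so that $t_i' > t_i$, i.e. in equilibrium $t'$ player $i$ is ``less aggressive'' (plays $1$ on a smaller set). I want to show $\mathrm{sign}(t_j'-t_j)=\mathrm{sign}(v_i(1)-v_i(0))$, which is equivalent to the claimed $\mathrm{sign}[(t_i-t_i')(t_j-t_j')]=\mathrm{sign}(v_i(1)-v_i(0))$. Consider the equilibrium condition for player $j$: $t_j$ is the zero of $\varepsilon_j\mapsto U_j(\varepsilon_j;s_i)=\int v_j(s_i(\varepsilon_i))\,dF(\varepsilon_i\mid\varepsilon_j)+\varepsilon_j$, where $s_i$ is $i$'s threshold strategy. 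Holding $\varepsilon_j$ fixed, compare this expression under $s_i(\cdot)=\mathbf{1}\{\cdot\geq t_i\}$ versus $\mathbf{1}\{\cdot\geq t_i'\}$: since $t_i'>t_i$, the indicator drops from $1$ to $0$ on $\varepsilon_i\in[t_i,t_i')$, a set of positive conditional measure, so the integral changes by $-\big(v_j(1)-v_j(0)\big)\cdot F([t_i,t_i')\mid \varepsilon_j)$, which has sign $-\,\mathrm{sign}(v_j(1)-v_j(0)) = -\,\mathrm{sign}(v_i(1)-v_i(0))$ by the maintained same-sign externality assumption $\mathrm{sign}(v_i(1)-v_i(0))=\mathrm{sign}(v_j(1)-v_j(0))$. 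Thus $U_j(\varepsilon_j;s_i)$ shifts down pointwise if $v_j(1)>v_j(0)$ and up pointwise if $v_j(1)<v_j(0)$; since the equilibrium threshold is where this crosses zero and the map is nondecreasing in $\varepsilon_j$ under single crossing (Assumption \ref{assum:single_crossing}), the zero moves \emph{right} when the curve shifts down and \emph{left} when it shifts up. Hence $\mathrm{sign}(t_j'-t_j)=\mathrm{sign}(v_j(1)-v_j(0))=\mathrm{sign}(v_i(1)-v_i(0))$, which is exactly the claim after multiplying by $\mathrm{sign}(t_i'-t_i)=+1$.

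A couple of technical points need care. First, I am implicitly using that $j$'s best response to a monotone strategy of $i$ is monotone and is characterized by a \emph{single} threshold crossing zero: this is precisely what Assumption \ref{assum:single_crossing} and Theorem 1 of \cite{athey2001} deliver, so I would invoke those rather than reprove monotonicity. Second, the ``shift the curve, the zero moves the opposite way'' argument requires that $U_j$ is (weakly) increasing through the crossing point so that the root is well-defined and moves monotonically; strictly speaking $U_j$ contains the $+\varepsilon_j$ term which is strictly increasing, and single crossing guarantees the sign pattern, so the root is unique and the comparative static is unambiguous. Third, I should make sure the positive-conditional-measure claim $F([t_i,t_i')\mid\varepsilon_j)>0$ is legitimate: this follows from Assumption \ref{assum:model_regularity}(i), which gives $\epsilon$ an atomless density with respect to Lebesgue measure conditional on $x$, so any nondegenerate interval has positive conditional probability for a.e.\ $\varepsilon_j$ — and since the threshold characterization only needs to hold for the relevant $\varepsilon_j$ this suffices.

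The main obstacle I anticipate is the first step — ruling out that exactly one coordinate coincides across two distinct equilibria — because it is the place where one must actually use atomlessness and nondegeneracy of the payoff index rather than just the ordinal single-crossing structure; once both coordinates are known to differ, the sign bookkeeping via the pointwise shift of $U_j$ is essentially forced. A secondary subtlety is handling the degenerate cases $v_i(1)=v_i(0)$ cleanly, where the stated sign is $0$ and one must observe that the game then has a unique equilibrium so the universally quantified statement over $t\neq t'$ is vacuous.
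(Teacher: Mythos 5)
Your proposal is correct and follows essentially the same route as the paper's proof: fix the equilibrium condition $U(t;\cdot)=0$ at the threshold, show that perturbing one player's threshold shifts the other's interim payoff in a direction determined by the sign of the externality, and invoke single crossing to conclude the best-response threshold moves accordingly, with the degenerate one-coordinate-equal case ruled out separately. The only (immaterial) difference is that you perturb $i$'s threshold and track $j$'s response, using the maintained same-sign externality condition to translate $\mathrm{sign}(v_j(1)-v_j(0))$ into $\mathrm{sign}(v_i(1)-v_i(0))$, whereas the paper perturbs $t_j$ and works directly with $v_i$.
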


\begin{proof}
Let $U_{i}(\varepsilon _{i};t_{j})$ denote $i$'s interim payoff when $j$
adopts a monotone pure strategy with a threshold $t_{j}$ with some abuse of
notation. That is, $U_{i}(\varepsilon _{i};t_{j})\equiv \int
v_{i}(1\{\varepsilon _{j}\geq t_{j}\})dF(\varepsilon _{j}|\varepsilon
_{i})+\varepsilon _{i}$. Under Assumptions \ref{assum:model_regularity} and %
\ref{assum:single_crossing}, $U_{i}(t_{i};t_{j})=0$ for all $t\equiv
(t_{i},t_{j})\in \mathcal{T}$. (Marginal players with private signals at
the equilibrium threshold receive zero interim payoff.) Suppose $%
v_{i}(0)>v_{i}(1)$. If $t_{j}>t_{j}^{\prime }$, then $U_{i}(t_{i};t_{j}^{%
\prime })<0$. This implies $U_{i}(\tilde{t}_{i};t_{j}^{\prime })<0$ for all $%
\tilde{t}_{i}<t_{i}$. Otherwise the single-crossing condition in Assumption %
\ref{assum:single_crossing} would fail. This then implies that $%
t_{i}^{\prime }>t_{i}$, because $U_{i}(t_{i}^{\prime };t_{j}^{\prime })$
needs to be zero for $t^{\prime }\in \mathcal{T}$. Hence $%
(t_{i}-t_{i}^{\prime })(t_{j}-t_{j}^{\prime })<0$. By a symmetric argument, $%
t_{i}^{\prime }<t_{i}$ if $t_{j}<t_{j}^{\prime }$. %
The case with $t_{j}\not=t_{j}'$ and $t_{i}=t_{i}^{\prime }$ are ruled out
under the stated conditions. To see this, suppose $t_{i}=t_{i}^{\prime }$.
Then the single-crossing condition in Assumption \ref{assum:single_crossing}%
\ and the finiteness of $\mathcal{T}$ implies $t_{j}=t_{j}^{\prime }$, which
means $t=t^{\prime }$. A symmetric argument proves the claim for the case
with $v_{i}(0)<v_{i}(1)$. $\ \ \ \ \square $\medskip
\end{proof}

Our goal is to test the null hypothesis that \textquotedblleft $T$\ is
degenerate in the data\textquotedblright\ (i.e., all observations in the data are drawn from a unique equilibrium) against the alternative that
\textquotedblleft $T$\ is stochastic in the data\textquotedblright . More specifically, let $F_{T}$ is the probability distribution over the equilibria in the game. We call $T$ degenerate if $F_{T}$ concentrates probability one on a particular equilibrium, and $T$ stochastic if $F_{T}$ allows for two or more equilibria to be selected with positive probability. To do
so, we exploit pairs of games whose equilibrium selection is related
under the alternative of multiple equilibria. We refer to such pairs as 
\textit{adjacent games} (we briefly reinstate the conditioning variables in the definition below for generality):

\begin{definition}
Let $m,n$ index two games.  When there are multiple equilibria, games $m$ and $n$ are \emph{adjacent} if $T_m$ and $T_n$ are not independent (conditional on $X_n, X_m$): $T_m \not\!\perp\!\!\!\perp T_n | X_n, X_m$.
\end{definition}

Further qualifications to such dependence are made for the results delineated in the subsections that follow.  Adjacency between two games may arise when equilibria are ``affiliated'' or equilibria are the same across two games. We provide such examples later in this section in more general settings.

Continuing on our two-player example, and once again suppressing the states $X$ which are presumed to be the same across games in this subsection, we make the following assumption:

\begin{Assumption}
\label{assum:CI} $\epsilon _{i,m}$ is independent from $(\epsilon
_{j,n},T_{n})$ conditional on $T_{m}$ for $m \ne n$.
\end{Assumption}

This assumption allows equilibrium selection to be correlated between
games, and thus allows these games to be adjacent. Under Assumption \ref{assum:CI}, we show that players'
choices from adjacent games ($D_{i,m}$ and $D_{j,n}$)\ are uncorrelated
under the null hypothesis that equilibrium selection is degenerate, i.e., outcomes are drawn from a single equilibrium, and generically correlated under the alternative that equilibrium selection is not degenerate.

To see this, decompose the covariance $C(D_{i,m},D_{j,n})$ between choices by two players $i$
and $j$ in adjacent games $m$ and $n$ as%
\begin{equation}
E[C(D_{i,m},D_{j,n}|T_{m},T_{n})]+C[E(D_{i,m}|T_{m},T_{n}),E(D_{j,n}|T_{n},T_{m})]%
\text{,}  \label{TC}
\end{equation}%
where $C(\cdot |\cdot )$ and $E(\cdot |\cdot )$ denote conditional
covariance and expectation respectively. Under Assumption \ref{assum:CI},%
\begin{eqnarray*}
E(D_{i,m}|D_{j,n},T_{m},T_{n}) &=&\Pr \{\epsilon _{i,m}\geq
T_{m,i}|1\{\epsilon _{j,n}\geq T_{n,j}\},T_{m},T_{n}\} \\
&=&\Pr \{\epsilon _{i,m}\geq T_{m,i}|T_{m}\} \\
&=& E(D_{i,m}|T_{m}),
\end{eqnarray*}%
where $T_{m,i}$ denotes the $i$-th component in the $(\#\mathcal{I})$-vector $T_{m}$. 
This implies $D_{i,m}$ is mean-independent from $D_{j,n}$ conditional on $T_n,T_m$.  
Thus $C(D_{i,m},D_{j,n}|T_{m},T_{n})=0$ and the first
term in (\ref{TC}) is zero regardless of whether the null is true. This implication
exploits the fact that, while private types are correlated within each game,
they are uncorrelated across adjacent games under Assumption \ref{assum:CI}.

Next, note that the second term in (\ref{TC}) is the covariance between $%
E(D_{i,m}|T_{m})$ and $E(D_{j,n}|T_{n})$ under Assumption \ref{assum:CI}.
Under the null, $T_{m}$ has a degenerate distribution across all games, so the covariance between $E(D_{i,m}|T_{m})$
and $E(D_{j,n}|T_{n})$, i.e., the second term in (%
\ref{TC}), is zero. 
Consequently, $C(D_{i,m},D_{j,n})=0$ under the null.
On the other hand, under the alternative, $%
E(D_{i,m}|T_{m})$ and $E(D_{j,n}|T_{n})$ are dependent and generally correlated as long as
the equilibrium selection is correlated between adjacent games. In such a
case, as we will show in Proposition 1, the sign of the covariance equals that of the interaction effect.

Given Lemma \ref{lm:total_order}, we define the following total order over $%
\mathcal{T}$: for all $t,t^{\prime }\in \mathcal{T}$, $t\geq t^{\prime }$\
if and only if $t_{i}\geq t_{i}^{\prime }$.\footnote{Note from the proof for Lemma 1 that $t_i=t_i'$ implies that $t_j=t_j'$.} We use this order to define
\textquotedblleft increasing\textquotedblright\ or \textquotedblleft
decreasing\textquotedblright\ functions over $\mathcal{T}$. We show that the
covariance between $D_{i,m}$ and $D_{j,n}$ is non-zero under the alternative
of multiple equilibria, and its sign is equal to the sign of the interaction
effect.

\begin{Assumption}
\label{assum:MES} (i) For any pair of adjacent games $n$ and $n$ with $m\not=n$ and any 
\textit{increasing function} $h$\textit{\ defined over }$\mathcal{T}$, $%
E[h(T_{n})|T_{m}=t]$\textit{\ is increasing in }$t$\textit{\ over }$\mathcal{%
T}$. (ii) $(\epsilon _{i,m},\epsilon _{j,m})$ is independent from $T_{m}$ in
each game $m$.
\end{Assumption}

Part (i) in Assumption \ref{assum:MES}\ states that there is some positive
association between equilibrium selection in adjacent games. It holds, for
example, if $T_{m}$ and $T_{n}$ are affiliated as defined in \cite{milgromWeber1982}. It is also satisfied trivially when adjacent games share the same
equilibrium, i.e., $\Pr \{T_{m}=T_{n}\}=1$ 
Part (ii) states that private
information does not interfere with equilibrium selection (once conditioning
on the states $x$ suppressed in notation).

\begin{proposition}
Suppose Assumptions \ref{assum:model_regularity}, \ref{assum:single_crossing}, \ref{assum:CI}, \ref{assum:MES} hold and $\#\mathcal{T}<\infty $. Then $%
C(D_{i,m},D_{j,n})\not=0$ for any adjacent games $m$ and $n$ with $m\not=n$ under the alternative, and its sign is the same as the
sign of $v_{i}(1)-v_{i}(0)$.
\end{proposition}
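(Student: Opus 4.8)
The plan is to start from the covariance decomposition (\ref{TC}) together with the reduction already carried out in the text: under Assumption~\ref{assum:CI} and its symmetric counterpart (swap the roles of $(i,m)$ and $(j,n)$), $\epsilon_{i,m}$ is independent of $T_n$ given $T_m$ and $\epsilon_{j,n}$ is independent of $T_m$ given $T_n$, so the first term of (\ref{TC}) vanishes and $C(D_{i,m},D_{j,n})=C\big(E(D_{i,m}\mid T_m),E(D_{j,n}\mid T_n)\big)$. By Assumption~\ref{assum:MES}(ii), $E(D_{i,m}\mid T_m)=1-F_{\epsilon_i}(T_{m,i})=:g_i(T_m)$ and $E(D_{j,n}\mid T_n)=1-F_{\epsilon_j}(T_{n,j})=:g_j(T_n)$, each a transformation of a single coordinate of the threshold vector that is strictly monotone along $\mathcal{T}$ by the atomless density in Assumption~\ref{assum:model_regularity}(i). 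Everything then reduces to signing $C(g_i(T_m),g_j(T_n))$.

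Next I would pass to the total order of Lemma~\ref{lm:total_order} ($t\ge t'\iff t_i\ge t_i'$). By Lemma~\ref{lm:total_order}, when $v_i(1)>v_i(0)$ the coordinates $t_i$ and $t_j$ move together, so $g_i$ and $g_j$ are both decreasing on $(\mathcal{T},\ge)$; when $v_i(1)<v_i(0)$ they move oppositely, so $g_i$ is decreasing while $g_j$ is increasing. Applying the positive-regression-dependence property in Assumption~\ref{assum:MES}(i) — $E[h(T_n)\mid T_m=t]$ increasing in $t$ for every increasing $h$ — with $h=-g_j$ (first case) or $h=g_j$ (second case), the map $\psi(t):=E[g_j(T_n)\mid T_m=t]$ is decreasing (first case) or increasing (second case). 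By iterated expectations $C(g_i(T_m),g_j(T_n))=C(g_i(T_m),\psi(T_m))$, a covariance between two monotone functions of the single variable $T_m$: co-monotone if $v_i(1)>v_i(0)$, anti-monotone if $v_i(1)<v_i(0)$. Chebyshev's covariance inequality then delivers the asserted \emph{sign} (nonnegative, resp. nonpositive).

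The part I expect to be the real obstacle is upgrading this to the strict conclusion $C(D_{i,m},D_{j,n})\ne0$ under the alternative, which amounts to showing that $g_i(T_m)$ and $\psi(T_m)$ are both non-degenerate. For $g_i(T_m)$: under the alternative $\mathrm{supp}(T_m)$ contains at least two equilibria, which by Lemma~\ref{lm:total_order} have distinct $i$-th coordinates, so $g_i(T_m)$ is non-degenerate by Assumption~\ref{assum:model_regularity}(i). For $\psi(T_m)$: adjacency gives $T_m\not\!\perp\!\!\!\perp T_n$, and since $T_m,T_n$ live on the finite totally ordered set $\mathcal{T}$, the survival map $t\mapsto\Pr(T_n\ge s^{\ast}\mid T_m=t)$ is non-constant over $\mathrm{supp}(T_m)$ for some $s^{\ast}\in\mathcal{T}\setminus\{\min\mathcal{T}\}$; by Assumption~\ref{assum:MES}(i) it is increasing, hence strictly larger at $\bar{t}:=\max\mathrm{supp}(T_m)$ than at $\underline{t}:=\min\mathrm{supp}(T_m)$. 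Writing the strictly monotone $-g_j$ (resp. $g_j$) as a positive combination of indicators $\mathbf{1}\{\cdot\ge s\}$ over $s\in\mathcal{T}$, the function $-\psi$ (resp. $\psi$) is the same positive combination of the corresponding survival maps, each non-decreasing and the one indexed by $s^{\ast}$ strictly increasing from $\underline{t}$ to $\bar{t}$; hence $\psi$ is non-constant on $\mathrm{supp}(T_m)$. Finally, for two non-degenerate monotone functions $\phi,\chi$ of $T_m$ (support finite and totally ordered), the identity $C(\phi(T_m),\chi(T_m))=\tfrac{1}{2}E[(\phi(T_m)-\phi(T_m'))(\chi(T_m)-\chi(T_m'))]$, with $T_m'$ an independent copy of $T_m$, is strict: the pair $(\bar{t},\underline{t})$ carries positive probability and, by monotonicity and non-degeneracy, $\phi(\bar{t})\ne\phi(\underline{t})$ and $\chi(\bar{t})\ne\chi(\underline{t})$, so the integrand is strictly positive (co-monotone case) or strictly negative (anti-monotone case). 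This yields $C(D_{i,m},D_{j,n})>0$ when $v_i(1)>v_i(0)$ and $<0$ when $v_i(1)<v_i(0)$, i.e., $\mathrm{sign}\,C(D_{i,m},D_{j,n})=\mathrm{sign}(v_i(1)-v_i(0))$.
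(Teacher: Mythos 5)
Your proof is correct and follows essentially the same route as the paper's: reduce $C(D_{i,m},D_{j,n})$ via the law of total covariance and Assumption~\ref{assum:CI} to the covariance of $E(D_{i,m}\mid T_m)$ and $E(D_{j,n}\mid T_n)$, use Lemma~\ref{lm:total_order} and Assumption~\ref{assum:MES} to exhibit this as a covariance of two co-monotone (or anti-monotone) functions of $T_m$, and sign it by Chebyshev's inequality. The only difference is that you spell out the non-degeneracy argument needed for strict inequality (using adjacency and the finite totally ordered support), whereas the paper delegates that step to Theorem~2 of Schmidt~(2003).
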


\begin{proof}
Recall from (\ref{TC}) that the covariance between $D_{i,m}$ and $D_{j,n}$
equals the covariance of $E(D_{i,m}|T_{m})$ and $E(D_{j,n}|T_{n})$ under 
Assumption \ref{assum:CI}. Applying the law of iterated expectations by
conditioning on $T_{m}$, we can write the covariance between $%
E(D_{i,m}|T_{m})$ and $E(D_{j,n}|T_{n})$\ as%
\begin{equation}
\int
[E(D_{i,m}|T_{m}=t)-E(D_{i,m})]\{E[E(D_{j,n}|T_{n})|T_{m}=t]-E(D_{j,n})%
\}dF_{T}(t)\text{,}  \label{cov_LIE}
\end{equation}%
where $F_{T}$ denote the marginal distribution of $T_{m}$, which is
non-degenerate under the alternative. Suppose $v_{i}(1)>v_{i}(0)$. Then
Lemma \ref{lm:total_order} implies that 
$t_{j}>t_{j}^{\prime }$\ whenever $t>t^{\prime }$ (recall this
total order is defined as $t_{i}>t_{i}^{\prime }$). Thus $h(t)\equiv $ $%
E(D_{j,n}|T_{n}=t)=$ $E(1\{\epsilon _{j,n}\geq t_{j}\}|T_{n}=t)$ is
decreasing in $t$ over $\mathcal{T}$ under (ii) in Assumption \ref{assum:MES}%
. Part (i) in Assumption \ref{assum:MES} then\ implies that $%
E[h(T_{n})|T_{m}=t]\equiv E(D_{j,n}|T_{m}=t)$ is decreasing in $t$ over $%
\mathcal{T}$. Next, note that $E(D_{i,m}|T_{m}=t)\equiv $ $\Pr \{\epsilon
_{i,m}\geq t_{i}|T_{m}=t\}$ is also decreasing in $t$ under (ii) in
Assumption \ref{assum:MES}. \ Therefore the covariance between $%
E(D_{i,m}|T_{m})$ and $E(D_{j,n}|T_{n})$ as expressed in (\ref{cov_LIE}) is
the covariance between two decreasing functions of the same latent variable $%
T_{m}$. Hence the covariance is positive. (See Theorem 2 in Schmidt (2003).)
A symmetric argument shows that\textit{\ }$C(D_{i,m},D_{j,n})<0$ when $%
v_{i}(0)>v_{i}(1)$. $\ \ \ \ \square $
\end{proof}

Given this proposition, it is possible to detect multiplicity in the data using the correlation in actions across adjacent games.  To do so, it is essential that researchers can match pairs of
adjacent games in the sample into clusters, within which the equilibrium
selection is known to be positively correlated. In practice, institutional details could be informative about how to construct such clusters. For example,
while studying the joint retirement decisions by married couples as
simultaneous Bayesian games, we may consider pairing households located in
geographic regions with similar demographics in a cluster. The underlying
rationale will be that regional institutional and cultural norms may induce association in the equilibrium selection in the joint retirement game for such couples.  It is worth mentioning that our method can be applied even when there is some form of weak dependence in equilibrium selection across clusters. In this case, our method applies as long as some version of the Law of Large Numbers allows us to pool across weakly dependent clusters and consistently estimate the covariance between $D_{im}$ and $D_{jn.}$

\subsection{Testable implications of multiple equilibria}

\label{subsec:general}

In this section we generalize the idea in Section \ref{subsec:overview} to a
full model with heterogeneous states and three or more players.  Recall that $T_{m}\equiv (T_{i,m}:i\in \mathcal{I})\in \mathcal{T}(X_{m})$
denotes the equilibrium selected in game $m$ in the data. Let $%
F_{T_{m},T_{n}|x,x^{\prime }}$ denote the\textit{\ joint} equilibrium
selection conditional on $(X_{m},X_{n})=(x,x^{\prime })$ in the DGP. That
is, for any $x$ and $x^{\prime }$, $F_{T_{m},T_{n}|x,x^{\prime }}$ is a
joint distribution with support $\mathcal{T}(x)\times \mathcal{T}(x^{\prime
})$. Our goal is to test the null that \textquotedblleft $%
F_{T_{m},T_{n}|x,x^{\prime }}$\ is degenerate for all $%
(x,x^{\prime })$\textquotedblright\ against the alternative that
\textquotedblleft $F_{T_{m},T_{n}|x,x^{\prime }}$\ is nondegenerate at least
for some $(x,x^{\prime })$\textquotedblright .

\begin{Assumption}
\label{assum:clusters} For any pair of adjacent games $m$ and $n$ with $m\not=n$, $\epsilon
_{i,m}$ is independent of $(\epsilon _{-i,n},T_{n},X_{n})$ conditional on $%
(T_{m},X_{m})$ for all $i\in \mathcal{I}$.
\end{Assumption}

Assumption \ref{assum:clusters}\ extends the conditional independence
condition (Assumption \ref{assum:CI}) in Section \ref{subsec:overview}\ by accounting for observed
states. It accommodates dependence between $(T_{m},T_{n},X_{m},X_{n})$, and
also allows for correlation between private types $\epsilon _{i,m}$, $\epsilon 
_{-i,m}$ and observed states $X_{m}$.

\begin{Assumption}
\label{assum:CES} \textit{There exists }$i,j\in \mathcal{I}$\textit{\ such
that the covariance between }$E(D_{i,m}|T_{m},X_{m})$\textit{\ and }$%
E(D_{j,n}|T_{n},X_{n})$\textit{\ is non-zero conditional on some }$%
(X_{m},X_{n})=(x,x^{\prime })$.
\end{Assumption}

 Assumption \ref{assum:CES}\
states there is nonzero correlation between equilibrium selection and thus implies adjacency of the two games. This extends an implication of Assumption \ref{assum:MES}\
in Section \ref{subsec:overview} to the case with heterogeneous states. It
does not hold if $(T_{m},X_{m})$ is independent from $(T_{n},X_{n})$ or if $X_n$ and $X_m$ are correlated but $T_m$ and $T_n$ are independent conditional on $X_m$ and $X_n$. To
further illustrate its content, we provide two examples of sufficient
conditions for Assumption \ref{assum:CES} later in this subsection.

\begin{proposition}
\label{pn:testable_imp} Suppose Assumptions \ref{assum:model_regularity}, %
\ref{assum:single_crossing}, \ref{assum:clusters} hold. Under the null, $%
C(D_{i,m},D_{j,n}|X_{m},X_{n})=0$ for all $i$, $j$. Under the alternative, $%
C(D_{i,m},D_{j,n}|X_{m},X_{n})\not=0$ for some $i,j$ and $%
(X_{m},X_{n})=(x,x^{\prime })$ if Assumption \ref{assum:CES}\ holds.
\end{proposition}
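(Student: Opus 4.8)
The plan is to mimic the two-player covariance decomposition from Section~\ref{subsec:overview}, but now carry the conditioning on $(X_m,X_n)$ throughout. First I would fix a pair of adjacent games $m\ne n$, fix players $i,j\in\mathcal{I}$, and condition everything on $(X_m,X_n)=(x,x')$. Using the law of total covariance, write
\begin{equation*}
C(D_{i,m},D_{j,n}\mid X_m,X_n)=E\bigl[C(D_{i,m},D_{j,n}\mid T_m,T_n,X_m,X_n)\bigm| X_m,X_n\bigr]+C\bigl[E(D_{i,m}\mid T_m,X_m),E(D_{j,n}\mid T_n,X_n)\bigm| X_m,X_n\bigr],
\end{equation*}
where in the second term I have already used Assumption~\ref{assum:clusters} to replace $E(D_{i,m}\mid T_m,T_n,X_m,X_n)$ by $E(D_{i,m}\mid T_m,X_m)$ and likewise for $j,n$. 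The first step, then, is to show the first term vanishes identically (under the null or the alternative). This is the analogue of the computation after~(\ref{TC}): conditional on $(T_m,T_n,X_m,X_n)$, the choice $D_{i,m}=\mathbf 1\{\epsilon_{i,m}\ge T_{m,i}\}$ depends only on $\epsilon_{i,m}$, and by Assumption~\ref{assum:clusters} $\epsilon_{i,m}\perp(\epsilon_{-i,n},T_n,X_n)$ given $(T_m,X_m)$, so in particular $\epsilon_{i,m}\perp D_{j,n}$ given $(T_m,T_n,X_m,X_n)$. Hence $D_{i,m}$ and $D_{j,n}$ are conditionally independent given $(T_m,T_n,X_m,X_n)$, the conditional covariance is zero, and the first term is zero.

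The second step handles the null: if $F_{T_m,T_n\mid x,x'}$ is degenerate for all $(x,x')$, then conditional on $(X_m,X_n)=(x,x')$ the vectors $T_m$ and $T_n$ are almost surely equal to fixed points $t^\ast(x),t^\ast(x')$, so both $E(D_{i,m}\mid T_m,X_m)$ and $E(D_{j,n}\mid T_n,X_n)$ are (conditional on $X_m,X_n$) degenerate random variables; the covariance of a constant with anything is zero, so the second term vanishes and $C(D_{i,m},D_{j,n}\mid X_m,X_n)=0$ for all $i,j$ and all $(x,x')$. This gives the first claim of the proposition.

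The third step is the alternative direction, and this is where Assumption~\ref{assum:CES} does the work and where there is essentially nothing left to prove beyond bookkeeping: by the first step, $C(D_{i,m},D_{j,n}\mid X_m,X_n)$ equals the second term, which is exactly the covariance between $E(D_{i,m}\mid T_m,X_m)$ and $E(D_{j,n}\mid T_n,X_n)$ conditional on $(X_m,X_n)$; Assumption~\ref{assum:CES} posits precisely that there exist $i,j$ and $(x,x')$ for which this quantity is nonzero, so $C(D_{i,m},D_{j,n}\mid X_m,X_n)\ne 0$ for those $i,j$ and $(x,x')$. I would then remark that, unlike Proposition~1 in Section~\ref{subsec:overview}, the present statement does not assert the \emph{sign} of this covariance, so no monotonicity/affiliation argument (no invocation of Lemma~\ref{lm:total_order} or a Schmidt-type inequality) is needed here — those ingredients would only be required if one additionally wanted to pin down the sign, which this proposition deliberately does not. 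The main obstacle, such as it is, is purely notational: being careful that the conditional-independence statement in Assumption~\ref{assum:clusters} is applied in the correct direction for each of the two cross-terms (once with roles $(i,m)$ versus $(-i,n)$, and the symmetric application is not actually needed since only $D_{i,m}$'s conditional mean must be collapsed), and that ``degenerate for all $(x,x')$'' is used at the level of the \emph{conditional} law $F_{T_m,T_n\mid x,x'}$ rather than the marginal.
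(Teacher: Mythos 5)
Your proof is correct and follows essentially the same route as the paper: the same law-of-total-covariance decomposition conditional on $(T_m,T_n,X_m,X_n)$, the same use of Assumption \ref{assum:clusters} to make the inner-covariance term vanish and to collapse the conditional means, degeneracy of $(T_m,T_n)$ given $(X_m,X_n)$ under the null, and a direct appeal to Assumption \ref{assum:CES} under the alternative. One small correction to your closing remark: the symmetric application of Assumption \ref{assum:clusters} (with the roles of $(j,n)$ and $(i,m)$ swapped) \emph{is} needed to write the surviving term as the covariance of $E(D_{i,m}\mid T_m,X_m)$ with $E(D_{j,n}\mid T_n,X_n)$ in the exact form that Assumption \ref{assum:CES} references --- but since you in fact performed that replacement (``and likewise for $j,n$''), nothing is lost.
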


\begin{proof}[Proof of Proposition \protect\ref{pn:testable_imp}.]
Applying the Law of Total Covariance conditioning on $(T_{m},T_{n})$, we
have 
\begin{eqnarray}
C(D_{i,m},D_{j,n}|X_{m},X_{n}) &=&\underset{\equiv A}{\underbrace{%
C[E(D_{i,m}|T_{m},T_{n},X_{m},X_{n}),E(D_{j,n}|T_{m},T_{n},X_{m},X_{n})|X_{m},X_{n}]%
}}  \notag \\
&+&\underset{\equiv B}{\underbrace{%
E[C(D_{i,m},D_{j,n}|T_{m},T_{n},X_{m},X_{n})|X_{m},X_{n}]}}\text{.}
\label{eq:tot_cov}
\end{eqnarray}%
Monotone pure-strategy Bayesian Nash equilibria exist under Assumptions \ref%
{assum:model_regularity}\ and \ref{assum:single_crossing}. Assumption \ref%
{assum:clusters}\ implies that for $i,j$,%
\begin{eqnarray}
&&E(D_{i,m}|D_{j,n},T_{m},T_{n},X_{m},X_{n})  \label{eq:prod} \\
&=&\Pr \{\epsilon _{i,m}\geq T_{m,i}|1\{\epsilon _{j,n}\geq
T_{n,j}\},T_{m},T_{n},X_{m},X_{n}\}  \notag \\
&=&\Pr \{\epsilon _{i,m}\geq T_{m,i}|T_{m},X_{m}\} \notag = E(D_{i,m}|T_{m},X_{m}).  \notag
\end{eqnarray}%
Thus $D_{i,m}$ is mean-independent from $(D_{j,n},T_{n},X_{n})$ conditional on $%
(T_{m},X_{m})$. Therefore, the second term ``B" on the right-hand side of (%
\ref{eq:tot_cov}) is zero regardless of whether the null is true. Under
Assumption \ref{assum:clusters}, the first term ``A" equals the covariance
between $E(D_{i,m}|T_{m},X_{m})$ and $E(D_{j,n}|T_{n},X_{n})$ conditional on 
$(X_{m},X_{n})$. Under the null, $(T_{m},T_{n})$ are degenerate once
conditioning on $(X_{m},X_{n})$. Thus the term ``A" is zero under the null.
Under the alternative, the covariance between $E(D_{i,m}|T_{m},X_{m}=x)$ and 
$E(D_{J,n}|T_{n},X_{n}=x^{\prime })$ is non-zero for some $(x,x^{\prime })$
by Assumption \ref{assum:CES}.$\ \ \ \ \ \square $\medskip
\end{proof}

We conclude this section with examples of primitive conditions that imply
Assumption \ref{assum:CES}. The first generalizes affiliated equilibrium
selection in Section \ref{subsec:overview}\ to the case with heterogeneous observable covariates.\medskip

\noindent \textbf{Example 1.} (Affiliated Equilibrium Selection in Adjacent
Games) Consider a game with $\mathcal{I}\equiv \{i,j\}$ where both players
have the same sign of externality over each other under all states. That is, 
$sign(\Delta v_{i}(x))=sign(\Delta v_{j}(x))$ for all $x$, where $\Delta
v_{i}(x)\equiv v_{i}(1,x)-v_{i}(0,x)$. By conditioning the proof in Lemma %
\ref{lm:total_order}\ on any $x$ with $\#\mathcal{T}(x)<\infty $, we can
show that under Assumptions \ref{assum:model_regularity}, \ref{assum:single_crossing} and \ref{assum:clusters},%
\begin{equation}
sign[(t_{i}-t_{i}^{\prime })(t_{j}-t_{j}^{\prime })]=sign[\Delta v_{i}(x)]%
\text{ for all }t\not=t^{\prime }\text{ in }\mathcal{T}(x)\text{.}
\label{comove_thresh}
\end{equation}%
As noted in Section \ref{subsec:overview}, this allows us to define total
orders over $\#\mathcal{T}(x)$ for each $x$. Suppose for some $(x,x^{\prime
})$ with $sign(\Delta v_{i}(x))=sign(\Delta v_{i}(x^{\prime }))$ and $\#%
\mathcal{T}(x),\#\mathcal{T}(x^{\prime })<\infty $:
\medskip

\noindent \textbf{[Monotone Equilibrium Selection (MES)]}\textit{\ For some $x,x'$,}%
 $E[h(T_{n},X_{n})|T_{m}=t,X_{m}=x,X_{n}=x^{\prime }]$\textit{\ is increasing
in }$t$\textit{\ over }$\mathcal{T}(x)$\textit{\ for any }$h$\textit{\ that
is increasing in its first argument. }

This condition generalises Assumption \ref{assum:MES} by allowing for heterogeneous states. It holds, for example, if $T_{m}$ and $T_{n}$ are
affiliated conditional on $(X_{m},X_{n})=(x,x^{\prime })$ as defined in
Milgrom and Weber (1982). As in Section \ref{subsec:overview}, we maintain
that private types does not affect equilibrium selection directly,
i.e., $\epsilon _{i}$ is independent from $T$ given $X$.

We show that the MES condition implies Assumption \ref{assum:CES}. 
Therefore, it also implies $%
C(D_{i,m},D_{j,n}|X_{m}=x,X_{n}=x^{\prime })\not=0$ under the alternative.
Let $E(D_{i}|T=t,X=x)\equiv \phi _{i}(t,x)$ and $E(D_{j}|T=t,X=x^{\prime
})\equiv \phi _{j}(t,x^{\prime })$. The Law of Iterated Expectation implies
that the covariance between $E(D_{i,m}|T_{m},X_{m})$\ and $%
E(D_{j,n}|T_{n},X_{n})$\ conditional on $(X_{m},X_{n})=(x,x^{\prime })$ is:%
\footnote{%
To derive this expression, consider a generic random vector $(W,Y,Z)$. Let $%
\mu _{f}(z)\equiv E[f(W,Z)|Z=z]$ and $\mu _{g}(z)\equiv E[g(Y,Z)|Z=z]$. Then 
$C(f(W,Z),g(Y,Z)|Z=z)$ equals 
\begin{eqnarray*}
&&E\{[f(W,z)-\mu _{f}(z)][g(Y,z)-\mu _{g}(z)]|Z=z\}=E[E\{\left. [f(W,z)-\mu
_{f}(z)][g(Y,z)-\mu _{g}(z)]\right\vert Y,Z=z\}|Z=z] \\
&=&E\left( \{E[f(W,z)|Y,Z=z]-\mu _{f}(z)\}\times \lbrack g(Y,z)-\mu
_{g}(z)]\mid Z=z\right) \text{.}
\end{eqnarray*}%
The claim in the text holds with $W\equiv T_{n},Y\equiv T_{m},$ $%
f(W,Z)\equiv E(D_{j,n}|T_{n},X_{n})$, $g(Y,Z)\equiv E(D_{i,m}|T_{m},X_{m})$
and $Z\equiv (X_{m},X_{n})$.}%
\begin{equation}
\int \left( 
\begin{array}{c}
\{\phi _{i}(t,x)-E[\phi _{i}(T_{m},x)|x,x^{\prime }]\}\times \\ 
\{E[\phi _{j}(T_{n},x^{\prime })|T_{m}=t,x,x^{\prime }]-E[\phi
_{j}(T_{n},x^{\prime })|x,x^{\prime }]\}%
\end{array}%
\right) dF_{T_{m}|X_{m},X_{n}}(t|x,x^{\prime })\text{.}  \label{cond_covar}
\end{equation}%
Here the expectation $E[\phi _{i}(T_{m},x)|x,x^{\prime }]$ is with respect
to $T_{m}$ given $(X_{m},X_{n})=(x,x^{\prime })$. First, consider the case
where $\Delta v_{i}(x)>0$ and $\Delta v_{i}(x^{\prime })>0$. By (\ref%
{comove_thresh}), we can define total orders over $\mathcal{T}(x)$ and $%
\mathcal{T}(x^{\prime })$ as \textquotedblleft $t>t^{\prime }$\ if and only
if $t_{i}>t_{i}^{\prime }$\textquotedblright . Because $\epsilon _{i}$ is
independent from $T$ given $X$, $\phi _{i}(t,x)=\Pr \{\epsilon _{i}\geq
t_{i}|X=x\}$ is decreasing in $t$ over $\mathcal{T}(x)$. Besides, $\phi
_{j}(t,x^{\prime })$ is also decreasing in $t$ over $\mathcal{T}(x^{\prime
}) $, because (\ref{comove_thresh}) implies that $t_{j}>t_{j}^{\prime }$\
whenever $t_{i}>t_{i}^{\prime }$. The MES condition then implies $E[\phi
_{j}(T_{n},x^{\prime })|T_{m}=t,x,x^{\prime }]$ is decreasing in $t$. Hence (%
\ref{cond_covar}) is the covariance between two decreasing functions of $%
T_{m}$ conditional on $(X_{m},X_{n})=(x,x^{\prime })$, and is positive. (See
Theorem 2 in \cite{schmidt2003}.) A symmetric argument shows that (\ref%
{cond_covar}) is negative when $\Delta v_{i}(x)<0$ and $\Delta
v_{i}(x^{\prime })<0$. Hence Assumption \ref{assum:CES}\ holds for adjacent games with $X_{m}=x$
and $X_{n}=x^{\prime }$. $\ \ \square $\medskip

In the second example, the adjacent games share the same equilibrium
selection with probability one, and Assumption \ref{assum:CES}\ is
satisfied.\medskip

\noindent \textbf{Example 2. }(Same Equilibrium in Adjacent Games) Consider
a game with three players or more ($\#\mathcal{I}\geq 3$). Assumption \ref%
{assum:CES}\ holds if adjacent games are known to select the same
equilibrium:\medskip

\noindent [\textbf{Same Equilibrium Selection (SES)}] $\Pr \{T_{m}=T_{n}|X_{m}=X_{n}\}=1$\textit{\ for
any two adjacent games }$m$\textit{\ and }$n$\textit{.}\medskip

\noindent The SES condition accommodates multiple equilibria across disjoint
pairs or clusters of adjacent games in the data.\footnote{It is worth mentioning that one can also use this condition to test multiple Markovian Perfect equilibria (MPE) in  dynamic games with zero discount factors, where private types are correlated through unobserved heterogeneity that is serially independent. In this case, one can treat the ``stage games'' from each period in the same Markov process as static games within the same ``cluster'', and the SES condition simply means players do not switch between MPEs within a process.} 
Under this condition, we
can define $\tilde{F}_{T|x}(.)$ as the marginal distribution of equilibria
across pairs of adjacent games with $(X_{m},X_{n})=(x,x)$. The conditional
covariance between $D_{i,m}$ and $D_{i,n}$ is:\footnote{%
Recall that agents labelled by the same index\ across games are not required
to be the same exact individual. It is only maintained that these
individuals with identical indexes share the same preference, and are drawn
from the same component in the DGP.}%
\begin{equation*}
\int \left( 
\begin{array}{l}
\{\phi _{i}(t,x)-E[\phi _{i}(T_{m},x)|X_{m}=x,X_{n}=x]\}\times \\ 
\{\phi _{i}(t,x)-E[\phi _{i}(T_{n},x)|X_{m}=x,X_{n}=x]\}%
\end{array}%
\right) d\tilde{F}_{T|x}(t)\text{,}
\end{equation*}%
which is the variance of $\phi _{i}(T,x)$ with $T$ drawn from $\tilde{F}%
_{T|x}$. Such a variance is strictly positive under the alternative because $%
\tilde{F}_{T|x}$ is non-degenerate. Hence Assumption \ref{assum:CES} holds
with $i=j$ and $(X_{m},X_{n})=(x,x)$. We can also replace $D_{i,n}$ with $%
D_{j,n}$ in this example. In this case, the covariance is also nonzero
generically, because $\phi _{i}(T,x)$ and $\phi _{j}(T,x)$ are functions of
the same variable. \ $\ \ \square $

Building on these ideas, one can implement a statistical test for the null hypothesis of single equilibrium in the DGP, using tools that already exist in the econometrics literature. Let $\{m,n\}$ label two adjacent games. With $%
D_{i,m},D_{j,n}$ being binary, zero conditional covariance is equivalent to conditional mean-independence.\footnote{Abstracting from covariates for simplicity and letting $p_{i,m}=E(D_{i,m})$ and $p_{j,n}=E(D_{j,n})$, this obtains since $C(D_{i,m},D_{j,n})=E[(D_{i,m}-p_{i,m})(D_{j,n}-p_{j,n})]=E[(E(D_{i,m}|D_{j,n})-p_{i,m})(D_{j,n}-p_{j,n})]$ by the Law of Iterated Expectations.  Further developing this expression we get that it is equal to $p_{j,n}[E(D_{i,m}|D_{j,n}=1)-E(D_{i,m})](1-p_{j,n})+(1-p_{j,n})[E(D_{i,m}|D_{j,n}=0)-E(D_{i,m})](0-p_{j,n})=p_{j,n}(1-p_{j,n})[E(D_{i,m}|D_{j,n}=1)-E(D_{i,m}|D_{j,n}=0)]$.  If the covariance is zero, then $E(D_{i,m}|D_{j,n}=1)-E(D_{i,m}|D_{j,n}=0)=E(D_{i,m})$.}  The null hypothesis of a unique equilibrium in the data, formulated as a zero conditional covariance restriction above, is thus equivalent to:
\begin{equation}
H_{0}:E\left( D_{i,m}|X_{m},X_{n},D_{j,n}\right) =E\left(
D_{i,m}|X_{m},X_{n}\right) \text{ a.e. }  \label{newNull}
\end{equation}%
This equality fails with positive probability under the alternative of
multiple equilibria.  Significance tests in
such nonparametric models have been studied extensively in the literature,
both for continuous and discrete covariates. See, for example, \cite{fanLi1996}, \cite{racine1997}, \cite{chenFan1999}, \cite{delgadoManteiga2001}, \cite{lavergne2001}, and more recently, in \cite{racine_et_al2006}.

\subsection{Relation to literature}

\cite{aguirregabiriaMira2019} studied static games with incomplete
information when players' private information are independent conditional on some unobserved heterogeneity known to all players but not measured in the sample. To identify the full model in its generality, they treated
equilibrium selection as a component in the game-level unobserved
heterogeneity. They proposed an eigendecomposition method to recover conditional choice probabilities (CCPs) given unobserved heterogeneity, and then used exclusion restrictions similar to \cite{bajari_et_al2010} to identify players' ex post payoff functions from these CCPs.

In their Section 4.3, \cite{aguirregabiriaMira2019} proposed a novel
idea to test multiple equilibria by checking whether the payoff functions
recovered from the component CCPs in the finite mixture leads to distinctive
payoff functions. The main idea is that, if there are multiple equilibria,
then some of the component CCPs recovered from the finite mixture would lead
to the same payoff functions. To implement this idea, one needs to follow
sequential steps: (1) recover component CCPs given unobserved heterogeneity
and equilibrium selection using eigendecomposition, (2) estimate
payoffs from component CCPs, and (3) comparing payoff estimates calculated
from different CCPs. The first two steps involve nonparametric estimation,
and the last step requires developing asymptotic theory of a test statistic
that properly accounts for the estimation error in the first two steps.

Unlike \cite{aguirregabiriaMira2019}, our goal in this paper is not to
fully identify discrete Bayesian games with unobserved heterogeneity and
multiple equilibria. Rather, our objective is to construct a robust test for
multiple equilibria in settings where players' information are generally
correlated. These include, but are not restricted to, the case considered in
\cite{aguirregabiriaMira2019} where players' private signals are
independent conditional on game-level unobserved heterogeneity. Moreover,
our method is easy to implement. It formulates the task of inferring
multiple equilibria as a test for the relevance of discrete covariates in
nonparametric expectation. Thus one can directly tap into existing methods
in the literature mentioned above to define a test statistic using kernel regressions, and characterize its asymptotic properties.

\subsection{Numerical examples}

In our simulation exercises, we investigate how the covariance between
choices varies across different designs of static Bayesian games with binary
choices. We first report results from games with two players $\{i,j\}$. In
all designs, the vector of observed states $X\in \mathbb{R}^{2}$ consists of
a discrete $X_{1}$, which is uniformly distributed over a discrete support $%
\{1,2,3,4\}$, and a continuous $X_{2}$, which is standard uniform over $[0,1]
$. We experiment with \textit{three} specifications of ex post payoffs for
player $i$ from choosing action $1$:
\newline
\newline
\noindent Specification 1. $\beta _{1}X_{1}+\beta _{2}X_{2}+\delta
D_{j}+\epsilon _{i}$.

\noindent Specification 2. $\beta _{1}X_{1}+\beta _{2}X_{2}+\beta
_{3}X_{1}X_{2}+\beta _{4}X_{2}^{2}+\delta D_{j}+\epsilon _{i}$.

\noindent Specification 3. $\beta _{1}X_{1}+\beta _{2}X_{2}+\beta
_{3}X_{1}X_{2}+\beta _{4}X_{2}^{2}+\beta _{5}\sqrt{X_{1}}+\delta
D_{j}+\epsilon _{i}$.
\newline
\newline
\noindent Ex post payoff for player $j$ in each design is specified in a
similar way, with subscripts $i$ and $j$ swapped in $D_{j}$ and $\epsilon
_{i}$. The slope coefficients are $\beta _{1}=1/4$, $\beta _{2}=1/5$, $\beta
_{3}=1/10$, $\beta _{4}=-1/5$, $\beta _{5}=-1/10$ and $\delta =-2$. The pair
of private information components follows a bivariate normal distribution with zero
mean, a unit variance, and a correlation coefficient $\rho \not=0$. For each
specification of ex post payoffs, we experiment with two designs with $\rho
=0.5$ and $0.7$ respectively. For each design, we solve for multiple
pure-strategy BNE, characterized as paired thresholds conditional on $X$%
.\medskip 

Given these parameter values and using the law of total covariances (conditioning on equilibrium selection), we can calculate the
covariance between $D_{i,m}$ and $D_{j,n}$ from adjacent games conditional
on state $X$. Table 1 reports the covariance between $D_{i,m}$ and $D_{j,n}$%
, averaged over the support of $X$, for various designs of the
data-generating process. The signs of these covariances are consistent with
the negative interaction effect $\delta <0$. As the data-generating process
moves farther way from the null hypothesis of single equilibrium ($\varphi =0 $ with $\varphi$ denoting the probability for mixing between two equilibria), the magnitude of these covariances also increases.

\begin{center}
Table 1. Covariance between $D_{i,m}$ and $D_{j,n}$ in 2-player
games\medskip 

\begin{tabular}{c|ccc|ccc}
\hline
& \multicolumn{3}{|c|}{$\rho =0.50$} & \multicolumn{3}{|c}{$\rho =0.70$} \\ 
\hline
$\varphi $ & Spec\#1 & Spec\#2 & Spec\#3 & Spec\#1 & Spec\#2 & Spec\#3 \\ 
\hline
0.1 & -0.0303 & -0.0300 & -0.0286 & -0.0366 & -0.0364 & -0.0348 \\ 
0.2 & -0.0538 & -0.0534 & -0.0509 & -0.0651 & -0.0647 & -0.0618 \\ 
0.3 & -0.0706 & -0.0701 & -0.0668 & -0.0854 & -0.0849 & -0.0812 \\ 
0.4 & -0.0807 & -0.0801 & -0.0764 & -0.0976 & -0.0970 & -0.0928 \\ 
0.5 & -0.0840 & -0.0835 & -0.0796 & -0.1017 & -0.1010 & -0.0966 \\ \hline
\end{tabular}%
\medskip 
\end{center}

Next, we do a similar investigation for games with three players indexed by $%
i,j,k$. The ex post payoff for player $1$ from choosing action $i$ is
specified as 
\begin{equation*}
X\beta _{i}+\delta _{ij}D_{j}+\delta _{ik}D_{k}+\epsilon _{k}\text{;}
\end{equation*}%
and likewise for the other two players $j$ and $k$ with coefficients $(\beta
_{j},\delta _{ji},\delta _{jk})$ and $(\beta _{k},\delta _{ki},\delta _{kj})$
respectively. The vector of private signals $(\epsilon _{i},\epsilon
_{j},\epsilon _{k})$ is tri-variate normal with zero mean and unit variance.
The correlation coefficients are $\rho _{ij}=0.75$ and $\rho _{ik}=\rho
_{jk}=0.8$. For simplicity, we let $\beta _{i}=\beta _{j}$, $\delta
_{ki}=\delta _{kj}=-3.1$, $\delta _{ik}=\delta _{jk}=-3.25$ and $\delta
_{ij}=\delta _{ji}=-0.95$ in all designs. Thus players $i$ and $j$ are ex
ante identical in all designs. The observed states $X=(X_{1},X_{2})$ follows
the same distribution as in two-player cases above. We also experiment with
three specifications of the index $X\beta _{i}$ in ex post payoffs, with $%
\beta _{i}=\beta _{j}=(1.73,-0.97,0.155,-0.16,-0.01)$ and $\beta
_{k}=(1.91,-1.645,-0.295,0.29,0.75)$, where the five components correspond to $(\beta_1,...,\beta_5)$ in Specification 1-3 above.
Similar to the case with two players,
our simulation uses DGPs that mix between extreme equlibria (in terms of
threshold for $D_{k}=1$), with adjacent games sharing the same equilibrium
selection.\medskip 

\begin{center}
Table 2. Analytic covariances in 3-player games\medskip 

\begin{tabular}{c|ccc|ccc}
\hline
& \multicolumn{3}{|c}{$Cov(D_{k,m},D_{k,n})$} & \multicolumn{3}{|c}{$%
Cov(D_{k,m},D_{j,n})$} \\ \hline
$\varphi $ & Spec\#1 & Spec\#2 & Spec\#3 & Spec\#1 & Spec\#2 & Spec\#3 \\ 
\hline
\multicolumn{1}{c|}{0.1} & 0.0441 & 0.0529 & 0.0335 & -0.0261 & -0.0275 & 
\multicolumn{1}{c}{-0.0218} \\ 
\multicolumn{1}{c|}{0.2} & 0.0785 & 0.0941 & 0.0596 & -0.0464 & -0.0489 & 
\multicolumn{1}{c}{-0.0388} \\ 
\multicolumn{1}{c|}{0.3} & 0.1030 & 0.1235 & 0.0782 & -0.0609 & -0.0641 & 
\multicolumn{1}{c}{-0.0509} \\ 
\multicolumn{1}{c|}{0.4} & 0.1177 & 0.1412 & 0.0894 & -0.0696 & -0.0733 & 
\multicolumn{1}{c}{-0.0582} \\ 
\multicolumn{1}{c|}{0.5} & 0.1226 & 0.1470 & 0.0931 & -0.0725 & -0.0763 & 
\multicolumn{1}{c}{-0.0606} \\ \hline
\end{tabular}%
\medskip 
\end{center}

Similar to the two-player cases, the magnitude of these covariances,
averaged over the state space, increases as the data-generating process
moves farther from the null hypothesis (that is, as $\varphi $ increases between 0 and 0.5).
The sign of the covariance of the same-type players $D_{k,m},D_{k,n}$ is
positive, while that between different types of players is negative.\footnote{Though we do not have analytical results for this pattern with three or more players, this is what one would expect to obtain in two-player games with strategic substitution (i.e., negative interaction parameters).} 

\section{Test of Multiple Equilibria in Dynamic Games}

\label{sec:dynamic}

Dynamic games with private information has been studied extensively in the
literature. See \cite{aguirregabiriaMira2007}, \cite{bajari_et_al2007}, \cite{pesendorferDengler2008} and \cite{arcidiaconoMiller2011} for example. As in the case with static games, presence of
multiple Markov Perfect equilibria in the sample poses challenges to estimation and
inference based on conditional choice probabilities.

We introduce new tests for multiple equilibria in the dynamic games with
incomplete information, where players' information are correlated in any
given period through a time-varying, serially correlated unobserved
heterogeneity. The idea for the test is related to that in the static case
in that it also amounts to investigating the correlation between decisions
made in \textquotedblleft adjacent\textquotedblright\ games. Specifically,
in the dynamic setting, different periods play the role of adjacent games and we form adjacent pairs of actions by matching
players across different periods but within the same dynamic game.

We also note that the ideas below can be used for static games if one assumes that the discount factor is zero and takes a `cluster' to correspond to one such dynamic game (with zero discount factor).

\subsection{The model \label{sec:dynamicModel}}

Let $\mathcal{I}$ denote a set of players. Each $i\in \mathcal{I}$ makes a
sequence of discrete actions $D_{i,t}\in \mathcal{D}$ indexed by time
periods $t=1,2,...\infty $. In each period $t$, player $i$ observes a vector
of \textit{common} states $(X_{t},\xi _{t})$ and \textit{private} shocks $%
\epsilon _{i,t}\equiv (\epsilon _{i,t}^{d}:d\in \mathcal{D})$. While $X_{t}$
is reported in the sample, $\xi _{t}$ is a time-varying discrete unobserved game 
heterogeneity (DUH) not recorded in the data. Let $D_{t}\equiv (D_{i,t}:i\in 
\mathcal{I})$ and $\epsilon _{t}\equiv
(\epsilon _{i,t}:i\in \mathcal{I})$ for each $t$.

A player's payoff at time $t$ is given by a real-valued function $\pi
_{i}(D_{t},X_{t},\xi _{t},\epsilon _{i,t})$. In each period $t$, players
make choices simultaneously to maximize%
\begin{equation*}
E\left\{ \sum\nolimits_{s=t}^{\infty }\beta ^{s-t}\pi _{i}(D_{s},X_{s},\xi
_{s},\epsilon _{i,s})\mid x_{t},\xi _{t},\varepsilon _{i,t}\right\} ,
\end{equation*}%
where $\beta \in (0,1)$ is a constant discount factor known to all players.
We maintain that $(X_{t},\xi _{t},\epsilon _{t})$ follows a controlled
first-order Markov process with a time-homogeneous transition density $h$
that satisfies the following conditional independence: 
\begin{equation}
h(X_{t+1},\xi _{t+1},\epsilon _{t+1}|X_{t},\xi _{t},\epsilon
_{t},D_{t})=g(\epsilon _{t+1}|X_{t+1},\xi _{t+1})f(X_{t+1},\xi
_{t+1}|X_{t},\xi _{t},D_{t})\text{,}  \label{ci}
\end{equation}%
where the private signals are independent conditional on common states:%
\begin{equation*}
g(\epsilon _{t+1}|X_{t+1},\xi _{t+1})=\prod\nolimits_{i\in \mathcal{I}%
}g_{i}(\epsilon _{i,t+1}|X_{t+1},\xi _{t+1})\text{,}
\end{equation*}%
with $g_{i}(\cdot \mid \cdot ,\cdot )$ denoting a conditional marginal
density of private signals. Note that this specification of the law of
transition permits the contemporary private signals $\epsilon
_{i,t},\epsilon _{j,t}$ to be correlated through unobserved $\xi _{t}$ when
conditioning on $X_{t}$ alone. In what follows, we suppress the time
subscript $t$ in $x_{t},\xi _{t},D_{i,t},D_{t}$ and $\varepsilon _{i,t}$ to
simplify notation.

Let $\sigma _{i}(x,\xi ,\varepsilon _{i})\rightarrow \mathcal{D}$ denote a
pure Markovian strategy adopted by a player $i$; and let $\sigma \equiv
(\sigma _{i}:i\in \mathcal{I})$. Let $p_{i}^{(\sigma )}(d_{-i}|x,\xi )$
denote the probability for $D_{-i}\equiv (D_{j}:j\in \mathcal{I}\backslash
\{i\})=d_{-i}$ conditional on $i$'s information $(x,\xi )$ when the strategy
profile is $\sigma $;\footnote{%
Note $\varepsilon _{i}$ is not in $i$'s information set due to
independence between $\varepsilon _{i}$ and $\varepsilon _{-i}$ conditional
on $(x,\xi )$.} and let 
\begin{equation*}
f_{i}^{(\sigma )}(x^{\prime },\xi ^{\prime }|x,\xi ,d_{i})\equiv
\sum\nolimits_{d_{-i}}p_{i}^{(\sigma )}(d_{-i}|x,\xi )f(x^{\prime },\xi
^{\prime }|x,\xi ,d_{i},d_{-i})
\end{equation*}%
denote the transition matrix given $i$'s information. Let $\tilde{V}%
_{i}^{(\sigma )}(x,\xi ,\varepsilon _{i})$ denote payoff for player $i$ if
it behaves optimally now and onwards given other firms' strategies in $%
\sigma $. By the Bellman's principle of optimality, 
\begin{eqnarray*}
&&\tilde{V}_{i}^{(\sigma )}(x,\xi ,\varepsilon _{i}) \\
&=&\max_{d_{i}\in \mathcal{D}}\left\{ \Pi _{i}^{(\sigma )}(d_{i},x,\xi
,\varepsilon _{i})+\beta \int \int \left[ \int \tilde{V}_{i}^{(\sigma
)}(x^{\prime },\xi ^{\prime },\varepsilon _{i}^{\prime })g_{i}(\varepsilon
_{i}^{\prime }|x^{\prime },\xi ^{\prime })d\varepsilon _{i}^{\prime }\right]
f_{i}^{(\sigma )}(x^{\prime },\xi ^{\prime }|x,\xi ,d_{i})d\xi ^{\prime
}dx^{\prime }\right\}
\end{eqnarray*}%
where $g_{i}(\varepsilon _{i}|x,\xi )$ denotes the conditional distribution
of $i$'s private shocks and%
\begin{equation*}
\Pi _{i}^{(\sigma )}(d_{i},x,\xi ,\varepsilon _{i})\equiv
\sum\nolimits_{d_{-i}}p_{i}^{(\sigma )}(d_{-i}|x,\xi )\pi
_{i}(d_{-i},d_{i},x,\xi ,\varepsilon _{i})
\end{equation*}%
is $i$'s expected payoff in time $t$ given other players' strategies in $%
\sigma $. Define an integrated value function for $i$ as:%
\begin{equation*}
V_{i}^{(\sigma )}(x,\xi )\equiv \int \tilde{V}_{i}^{(\sigma )}(x,\xi
,\varepsilon _{i})g_{i}(\varepsilon _{i}|x,\xi )d\varepsilon _{i}\text{.}
\end{equation*}%
For a given strategy profile $\sigma $, the integrated value function is
characterized as the unique solution to the following fixed point equation:%
\footnote{%
It can be shown by verifying the Blackwell sufficient conditions that, for
any given $i$ and $\sigma $, the right-hand side is a contraction mapping in
the functional space for integrated value functions.}%
\begin{eqnarray*}
&&V_{i}^{(\sigma )}(x,\xi ) \\
&=&\int \max_{d_{i}\in \mathcal{D}}\left\{ \Pi _{i}^{(\sigma )}(d_{i},x,\xi
,\varepsilon _{i})+\beta \int \int V_{i}^{(\sigma )}(x^{\prime },\xi
^{\prime })f_{i}^{(\sigma )}(x^{\prime },\xi ^{\prime }|x,\xi ,d_{i})d\xi
^{\prime }dx^{\prime }\right\} g_{i}(\varepsilon _{i}|x,\xi )d\varepsilon
_{i},\text{\ }\forall x\in \mathcal{X}\text{.}
\end{eqnarray*}%
A \textit{Markov Perfect Equilibrium} (MPE) is a profile of strategies $%
\sigma ^{\ast }$ such that for any $i$ and $(x,\varepsilon _{i})$,%
\begin{equation*}
\sigma _{i}^{\ast }(x,\xi ,\varepsilon _{i})=\arg \max_{d_{i}\in \mathcal{D}%
}\left( \Pi _{i}^{(\sigma ^{\ast })}(d_{i},x,\xi ,\varepsilon _{i})+\beta
\int \int V_{i}^{(\sigma ^{\ast })}(x^{\prime },\xi ^{\prime
})f_{i}^{(\sigma ^{\ast })}(x^{\prime },\xi ^{\prime }|x,\xi ,d_{i})d\xi
^{\prime }dx^{\prime }\right)
\end{equation*}%
for all $i\in \mathcal{I}$ and $(x,\xi ,\varepsilon _{i})$. In general, the
model admits multiple MPE. Our goal is to test the null hypothesis that the
data-generating process is rationalized by a single MPE.

\subsection{Detecting multiple equilibria \label{sec:MPE}}

Consider a sample that consists of a large number of independent dynamic
games. For each game, the sample reports states and choices over a finite
number of periods $t=1,2,...,T$. We maintain the following assumption about
the data-generating process.

\begin{Assumption}
\label{assum:fix_mpe} Within each game, the players follow a fixed profile
of MPE strategies $\sigma ^{\ast }$ in all periods $t=1,2,...,T$.
\end{Assumption}

To fix ideas, consider a simplified model with no states $X_{t}$. An MPE in
this case is a profile of functions $(\sigma _{i}:i\in \mathcal{I})$, with
each $\sigma _{i}$ mapping from $(\xi _{t},\epsilon _{i,t})$ to $D_{i,t}$.
Let $\mathcal{M}$ denote the discrete and finite set of MPE admitted by the
model, and let $M\equiv \#\mathcal{M}$ denote its cardinality. The
equilibrium selection mechanism $\varphi (\cdot )$ is a probability mass
function with support $\mathcal{M}$. Let $K$ denote the cardinality of the
support of $\xi _{t}$, and let $\lambda (\cdot )$ denote the probability
mass of $\xi _{t}$.\footnote{%
In general $\lambda (\cdot )$ need not be time-homogeneous. We suppress such
dependence to simplify notation.}

Let's partition the set of players $\mathcal{I}$ into $\mathcal{I}_{A}%
\mathcal{\cup I}_{B}$, and define 
\begin{equation*}
D_{t}\equiv (D_{A,t},D_{B,t})\text{ with }D_{A,t}\equiv (D_{i,t})_{i\in 
\mathcal{I}_{A}}\text{ and }D_{B,t}\equiv (D_{j,t})_{j\in \mathcal{I}_{B}}.
\end{equation*}%
Let $\delta _{A}\equiv (\#\mathcal{D})^{\#\mathcal{I}_{A}}$ and $\delta
_{B}\equiv (\#\mathcal{D})^{\#\mathcal{I}_{B}}$ denote the cardinality of
the support of $D_{A,t}$ and $D_{B,t}$

In what follows, let $\sum_{m}$ and $\sum_{\xi }$ denote the summation of $m$
and $\xi $ over their respective support. Using the law of total
probability, decompose the joint probability mass of $D_{t}$ as%
\begin{eqnarray}
P(D_{t}) &=&\sum\nolimits_{m}\varphi (m)\left[ \sum\nolimits_{\xi }\lambda
(\xi )P_{m}(D_{A,t},D_{B,t}|\xi )\right]  \label{PD} \\
&=&\sum\nolimits_{m,\xi }\varphi (m)\lambda (\xi )P_{m}(D_{A,t}|\xi
)P_{m}(D_{B,t}|\xi ),  \notag
\end{eqnarray}%
where $P_{m}(\cdot |\xi )$ denotes the conditional probability mass implied
by a single MPE $m\in \mathcal{M}$ conditional on $\xi $.

For convenience, let $\{\omega _{j}\}_{j=1,..,J}$ denote elements of the
joint support of the discrete vector $(m,\xi _{t})$. By construction, $%
J\equiv MK$. Let $\theta_{j}\equiv \Pr \{(m,\xi _{t})=\omega _{j}\}$. In matrix
notation, the joint probability mass of $(D_{A,t},D_{B,t})$ is summarized by 
\begin{equation}
\underset{Q_{A}}{\mathbf{P}\equiv \underbrace{\left( 
\begin{array}{ccc}
q_{A,1} & \cdots & q_{A,J}%
\end{array}%
\right) }}\underset{\Lambda }{\underbrace{\left( 
\begin{array}{ccc}
\theta_{1} & 0 & 0 \\ 
0 & \ddots & 0 \\ 
0 & 0 & \theta_{J}%
\end{array}%
\right) }}\underset{Q_{B}^{\prime }}{\underbrace{\left( 
\begin{array}{c}
q_{B,1}^{\prime } \\ 
\vdots \\ 
q_{B,J}^{\prime }%
\end{array}%
\right) }=}\sum\nolimits_{j}\theta_{j}q_{A,j}q_{B,j}^{\prime }\text{,}
\label{decomp1}
\end{equation}%
where $q_{A,j}$ is a $\delta _{A}$-by-$1$ column vector $\{P_{m}(D_{A,t}=a%
\mid \xi _{t}):a\in \mathcal{D}_{A}\}$ with $(m,\xi _{t})=\omega _{j}$, and $%
q_{B,j}$ is a $\delta _{B}$-by-$1$ column vector $\{P_{m}(D_{B,t}=b\mid \xi
_{t}):b\in \mathcal{D}_{B}\}$ with $(m,\xi _{t})=\omega _{j}$.

\begin{Assumption}
\label{assum:FR} $\min \{\delta _{A},\delta _{B}\}>MK$; and both $Q_{A}$ and 
$Q_{B}$ have full rank.
\end{Assumption}

The rank condition requires that the vectors of conditional choice
probabilities in $\mathbf{P}$ are not linearly dependent. Such conditions
are common in structural models with discrete unobserved heterogeneity. For
example, Assumption 2.1 and 2.2 in Hu (2008) uses such rank conditions to
identify general nonlinear models with misclassification errors. Assumption
2 in Hu and Shum (2012) use such conditions to identify dynamic models with
unobserved state variables. In both cases, the rank conditions are
introduced to ensure linear independence between component distributions in
a mixture model. In our case, this rank conditions essentially rule out
pathologies where the choices probabilities conditional on unobserved states
are linear combinations of each other. In Section 3.6, we provide numerical
examples that satisfy these rank conditions in MPE. Our calculation also
verifies that Assumption \ref{assum:FR} hold generically in the sense that
continuous variation in the model parameters almost surely implied
equilibrium choice probabilities that satisfy the rank conditions.

This condition is easier to hold when the number of players or the number of choices are large relative to the cardinality of the support of unobserved heterogeneity and equilibria. For example, with $M=2$, $K=4$, and players faces 3 choices. Then we'll need two players in sets $A$ and $B$ each. In the next subsections, we examine situations when the set of players and/or alternatives is not sufficiently large.

Under Assumption \ref{assum:FR}, $\mathbf{P}$ is a finite mixture with each
component being a rank-one matrix $q_{A,j}q_{B,j}^{\prime }$ and mixing
weights given by $w_{j}$. Assumption \ref{assum:FR} also implies that it is
not possible to decompose the $\delta _{A}$-by-$\delta _{B}$ matrix $\mathbf{%
P}$ into other observationally equivalent forms of finite mixtures with
fewer components.\footnote{%
To see this, suppose one can write an alternative decomposition $\mathbf{P}=%
\hat{Q}_{1}\hat{\Lambda}\hat{Q}_{2}^{\prime }$, where $\hat{\Lambda}$ is a
diagonal matrix with dimension $\tilde{J}<MK$. This would imply the rank of $%
\mathbf{P}$ is strictly less than $MK$. On the other hand, Assumption \ref%
{assum:FR} and (\ref{decomp1}) imply that $\mathbf{P}$ must have full rank $%
J=MK$. Contradiction.} It then follows that $MK$ can be generically
identified as the rank of $\mathbf{P}$ under maintained conditions. 

Next, construct a $\#(\mathcal{I})$-vector of actions across two periods%
\begin{equation*}
\tilde{D}_{t}\equiv (D_{A,t-1},D_{B,t+1})\text{.}
\end{equation*}%
By construction, the support of $\tilde{D}_{t}$ and $D_{t}$ are both $%
\mathcal{D}^{\#\mathcal{I}}$ with cardinality $\delta _{A}\delta _{B}=(\#%
\mathcal{D})^{\#\mathcal{I}}$. With slight abuse of notation, let $\lambda
_{m}(\xi ,\xi ^{\prime })\equiv \Pr \{\xi _{t}=\xi ,\xi _{t+1}=\xi ^{\prime
}|m\}$ denote the joint probability mass for time-varying and serially
correlated DUH $\xi _{t}$ and $\xi _{t+1}$ implied in a single MPE
indexed by $m\in \mathcal{M}$; and let $P_{m}(\cdot |\xi ,\xi ^{\prime })$
denote the CCPs given $\xi _{t}=\xi $ and $\xi _{t+1}=\xi ^{\prime }$
implied in that single MPE. Similarly, let $\{\tilde{\omega}%
_{j}\}_{j=1,..,J^{\prime }}$ denote elements of the joint support of the
discrete vector $(m,\xi _{t},\xi _{t+1})$. By construction $J^{\prime
}\equiv MK^{2}$. The joint probability mass of $(\tilde{D}%
_{t},D_{t})$ is 
\begin{equation*}
P(\tilde{D}_{t},D_{t})=\sum\nolimits_{m,\xi ,\xi ^{\prime }}\varphi
(m)\lambda _{m}(\xi ,\xi ^{\prime })P_{m}(\tilde{D}_{t},D_{t}|\xi ,\xi
^{\prime }),
\end{equation*}%
where 
\begin{eqnarray}
&&P_{m}(\tilde{D}_{t},D_{t}|\xi ,\xi ^{\prime
})=P_{m}(D_{B,t+1},D_{A,t-1},D_{t}|\xi ,\xi ^{\prime })  \notag \\
&=&P_{m}(D_{B,t+1}|D_{t},D_{A,t-1},\xi ,\xi ^{\prime
})P_{m}(D_{t}|D_{A,t-1},\xi ,\xi ^{\prime })P_{m}(D_{A,t-1}|\xi ,\xi
^{\prime })  \notag \\
&=&P_{m}(D_{B,t+1}|\xi ^{\prime })P_{m}(D_{t}|\xi ^{\prime },\xi
)P_{m}(D_{A,t-1}|\xi )\text{.}  \label{PDt}
\end{eqnarray}%
The last equality above uses several implications of the assumptions
maintained above: (i)$\ D_{i,t}$ is determined by $(\xi _{t},\epsilon
_{i,t}) $ in each single MPE; (ii) $\epsilon _{i,t}$ is independent of past
histories of $(\epsilon _{s},\xi _{s})_{s\leq t-1}$ once conditional on $\xi
_{t}$, and (iii) $\xi _{t+1}$ is independent of past $(\xi _{s},\epsilon
_{s})_{s\leq t-1}$ once conditional on $\xi _{t}$.

Let $\mathbf{\tilde{P}}$ denote a $\delta _{A}\delta _{B}$-by-$\delta
_{A}\delta _{B}$ matrix that summarizes the joint probability mass of $(%
\tilde{D}_{t},D_{t})$, with rows indexing the realization of $\tilde{D}_{t}$%
\ and columns indexing the realization of $D_{t}$. 
Let $\tilde{\theta}_{j}\equiv \Pr \{(m,\xi _{t},\xi _{t+1})=%
\tilde{\omega}_{j}\}$.
In matrix notation, 
\begin{equation}
\underset{\tilde{Q}}{\mathbf{\tilde{P}}\equiv \underbrace{\left( 
\begin{array}{ccc}
\tilde{q}_{1} & \cdots & \tilde{q}_{J^{\prime }}%
\end{array}%
\right) }}\underset{\tilde{\Lambda}}{\underbrace{\left( 
\begin{array}{ccc}
\tilde{\theta}_{1} & 0 & 0 \\ 
0 & \ddots & 0 \\ 
0 & 0 & \tilde{\theta}_{J^{\prime }}%
\end{array}%
\right) }}\underset{Q^{\prime }}{\underbrace{\left( 
\begin{array}{c}
q_{1}^{\prime } \\ 
\vdots \\ 
q_{J^{\prime }}^{\prime }%
\end{array}%
\right) }=}\sum\nolimits_{j}\tilde{\theta}_{j}\tilde{q}_{j}q_{j}^{\prime }\text{,%
}  \label{decomp2}
\end{equation}%
where $\tilde{q}_{j}$ is a $\delta _{A}\delta _{B}$-by-$1$ column vector 
\begin{equation*}
\{P_{m}(D_{A,t-1}=d_{A}|\xi _{t}=\xi )P_{m}(D_{B,t+1}=d_{B}|\xi _{t+1}=\xi
^{\prime }):(d_{A},d_{B})\in \mathcal{D}^{\#\mathcal{I}}\}
\end{equation*}%
with $(m,\xi ,\xi ^{\prime })=\tilde{\omega}_{j}$; and $q_{j}$ is a $\delta
_{A}\delta _{B}$-by-$1$ column vector 
\begin{equation*}
\{P_{m}(D_{t}=d|\xi _{t}=\xi ,\xi _{t+1}=\xi ^{\prime }):d\in \mathcal{D}%
^{\#\mathcal{I}}\}
\end{equation*}%
with $(m,\xi ,\xi ^{\prime })=\tilde{\omega}_{j}$.

\begin{Assumption}
\label{assum:FR2} Both $\tilde{Q}$ and $Q$ have full rank $MK^{2}$.
\end{Assumption}

Similar to Assumption \ref{assum:FR}, this is a restriction on the choice
probabilities conditional on DUH $\xi $ and equilibrium index $m$.
Also note that by construction, $\min \{\delta _{A},\delta _{B}\}>MK$
implies that $\delta _{A}\delta _{B}>MK^{2}$. It follows from Kasahara and
Shimotsu (2016) that $MK^{2}$ is generically identified as the rank of the $%
\delta _{A}\delta _{B}$-by-$\delta _{A}\delta _{B}$ square matrix $\mathbf{\ 
\tilde{P}}$ under maintained conditions. Consequently, both $M$ and $K$ are
identified from knowledge of $MK=rank(\mathbf{P})$ and $MK^{2}=rank(\mathbf{%
\ \tilde{P}})$. The cardinality of equilibria in the DGP is revealed as $%
\left[ rank(\mathbf{P})\right] ^{2}/rank(\mathbf{\tilde{P}})$.

\begin{proposition}
\label{pn:dynamicTest} Suppose Assumptions \ref{assum:fix_mpe}, \ref%
{assum:FR} and \ref{assum:FR2}\ hold in the dynamic games with private
information above. Then the number of MPE in the data-generating process is $%
\left[ rank(\mathbf{P})\right] ^{2}/rank(\mathbf{\tilde{P}})$.
\end{proposition}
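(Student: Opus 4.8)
The plan is to read off the two ranks $rank(\mathbf{P})$ and $rank(\mathbf{\tilde P})$ directly from the factorizations (\ref{decomp1}) and (\ref{decomp2}), and then combine them arithmetically. The only genuinely substantive ingredient is the elementary fact that a product $A\Lambda B^{\prime }$, with $\Lambda$ square invertible diagonal and $A,B$ of full column rank equal to the dimension of $\Lambda$, itself has rank equal to that common value; everything else is bookkeeping about which dimensions are largest.

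First I would treat $\mathbf{P}$. By (\ref{decomp1}) --- which already uses the within-period conditional independence $P_{m}(D_{A,t},D_{B,t}\mid\xi)=P_{m}(D_{A,t}\mid\xi)P_{m}(D_{B,t}\mid\xi)$ coming from the product form of $g(\epsilon_{t}\mid X_{t},\xi_{t})$ --- we have $\mathbf{P}=Q_{A}\Lambda Q_{B}^{\prime }$, where $\Lambda$ is $J\times J$ diagonal with $J=MK$ and diagonal entries $\theta_{j}=\Pr\{(m,\xi_{t})=\omega_{j}\}$. Since $\{\omega_{j}\}_{j=1,\dots,J}$ enumerates the points of the joint support of $(m,\xi_{t})$, every $\theta_{j}$ is strictly positive, so $\Lambda$ is invertible. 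Assumption \ref{assum:FR} gives $\min\{\delta_{A},\delta_{B}\}>MK$ together with full rank of the $\delta_{A}\times J$ matrix $Q_{A}$ and the $\delta_{B}\times J$ matrix $Q_{B}$, i.e.\ full column rank $J$. Then $Q_{A}\Lambda$ has column rank $J$, and by Sylvester's rank inequality $rank(Q_{A}\Lambda Q_{B}^{\prime })\geq J+J-J=J$, while trivially $rank(Q_{A}\Lambda Q_{B}^{\prime })\leq J$; hence $rank(\mathbf{P})=MK$.

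Next the identical argument applies to $\mathbf{\tilde P}$. The factorization $\mathbf{\tilde P}=\tilde Q\tilde\Lambda Q^{\prime }$ in (\ref{decomp2}) is already justified by the conditional-independence implications (i)--(iii) listed after (\ref{PDt}): each coordinate of $D_{t}$ is pinned down by $(\xi_{t},\epsilon_{i,t})$ within a fixed MPE, $\epsilon_{i,t}$ is independent of the history $(\epsilon_{s},\xi_{s})_{s\leq t-1}$ given $\xi_{t}$, and $\xi_{t+1}$ is independent of that history given $\xi_{t}$. Here $\tilde\Lambda$ is $J^{\prime }\times J^{\prime }$ diagonal with $J^{\prime }=MK^{2}$ and strictly positive diagonal entries $\tilde\theta_{j}=\Pr\{(m,\xi_{t},\xi_{t+1})=\tilde\omega_{j}\}$, and $\tilde Q,Q$ are $\delta_{A}\delta_{B}\times J^{\prime }$ matrices of full column rank $J^{\prime }$ by Assumption \ref{assum:FR2}. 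Since $\min\{\delta_{A},\delta_{B}\}>MK$ forces $\delta_{A}\delta_{B}>MK^{2}=J^{\prime }$, the same rank bound yields $rank(\mathbf{\tilde P})=MK^{2}$.

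Finally I would combine: $[rank(\mathbf{P})]^{2}/rank(\mathbf{\tilde P})=(MK)^{2}/(MK^{2})=M$, and as a by-product $K=rank(\mathbf{\tilde P})/rank(\mathbf{P})$. Because $\mathbf{P}$ and $\mathbf{\tilde P}$ are matrices of joint choice probabilities, their ranks are identified from the DGP, so $M$ is too. The main obstacle is not any single deep step but invoking the two rank conditions with the correct orientation --- that $Q_{A},Q_{B}$ (resp.\ $\tilde Q,Q$) are of \emph{column} rank $MK$ (resp.\ $MK^{2}$), which needs the accompanying dimension inequalities $\min\{\delta_{A},\delta_{B}\}>MK$ and $\delta_{A}\delta_{B}>MK^{2}$ --- and checking that all diagonal weights $\theta_{j},\tilde\theta_{j}$ are strictly positive, which is precisely the convention that the support of $(m,\xi_{t})$ (resp.\ $(m,\xi_{t},\xi_{t+1})$) lists only points of positive probability; an equilibrium selected with probability zero is not part of the DGP and correctly does not inflate $M$.
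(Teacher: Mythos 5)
Your proposal is correct and follows essentially the same route as the paper: the paper's argument (given in the text surrounding equations (\ref{decomp1}) and (\ref{decomp2}) rather than in a formal proof environment) likewise reads $rank(\mathbf{P})=MK$ and $rank(\mathbf{\tilde{P}})=MK^{2}$ off the factorizations $Q_{A}\Lambda Q_{B}^{\prime}$ and $\tilde{Q}\tilde{\Lambda}Q^{\prime}$ under the full-rank Assumptions \ref{assum:FR} and \ref{assum:FR2}, and then forms the ratio $(MK)^{2}/(MK^{2})=M$. Your only additions are to make the linear algebra explicit (Sylvester's inequality plus positivity of the diagonal weights, which the paper leaves implicit in its definition of the supports of $(m,\xi_{t})$ and $(m,\xi_{t},\xi_{t+1})$), which is a faithful filling-in rather than a different approach.
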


For this method to work, it is essential that we choose to partition the history and focus on the joint probability mass of  $\tilde{D}_t$ and $D_t$ in $\mathbf{\tilde{P}}$. 
If we were to partition the history differently by defining $\mathbf{\tilde{P}}$ as the joint probability mass of $D_t$ and $D_{t+1}$, then its finite mixture representation would only consist of $MK$ components.

The basic idea in this subsection can be extended to accommodate the
transition of observable states in addition to DUH. To do so,
condition the identification method on realization of $X_{t+1}$ and $X_{t}$.
The mixing weights in $\mathbf{P}$\ and $\mathbf{\tilde{P}}$ become $\Pr
\{\xi _{t}=\xi |X_{t}=x,m\}$ and $\Pr \{\xi _{t+1}=\xi ^{\prime },\xi
_{t}=\xi |X_{t+1}=x^{\prime },X_{t}=x,m\}$ respectively. The derivation in (%
\ref{PD}) and (\ref{PDt}) carries through after conditioning on
\textquotedblleft $X_{t}=x$\textquotedblright\ and \textquotedblleft $%
X_{t}=x $ and $X_{t+1}=x^{\prime }$\textquotedblright\ respectively.

We have thus formulated the test for multiple MPE in the data-generating
process as a question of inferring the rank of a consistently estimable
matrix. \cite{kleibergenPaap2006} proposed a rank test which uses the
singular value decomposition and has a pivotal limiting distribution under
the null. \cite{kasaharaShimotsu2014} used this test statistic to construct
a consistent estimator for the number of components in a finite mixture
model, following a sequential testing algorithm in \cite{robinSmith2000}.
The estimator in \cite{kasaharaShimotsu2014} can be used for
our purpose of testing multiple MPE. The asymptotic properties of the rank
test and the estimator are presented in \cite{kleibergenPaap2006} and
\cite{kasaharaShimotsu2014}. Both papers documented evidence of the test
and estimator's finite sample performance via various simulation exercises.

\subsection{Extension: binary decisions with many players}

\label{sec:symMPE}

The method in Section \ref{sec:MPE} requires the support of choice profiles
be larger than the joint support of DUH and equilibria (that is, %
$(\#\mathcal{D})^{\#\mathcal{I}_{A}}$, $(\#\mathcal{D})^{\#\mathcal{I}%
_{B}}>MK$). If choices are binary $\mathcal{D}\equiv \{0,1\}$, one can
extend the logic to test multiple equilibria provided there are sufficiently
many players in a game.

To fix ideas, let's first consider a model with no $X_{t}$ as before.
Suppose the set of players in each dynamic game is partitioned into two
types labelled by $1$ and $2$. Players with the same type are ex ante
identical in that they share the same ex post preference and have
idiosyncratic shocks drawn independently from the same distribution. Let $%
n_{1}$ and $n_{2}$ denote the number of type-1 and type-2 players so that $\#%
\mathcal{I}=n=n_{1}+n_{2}$. A type-symmetric MPE is characaterized by $%
\sigma _{\tau }^{\ast }(\xi ,\varepsilon _{i})\rightarrow \mathcal{D}$ for
types $\tau =1,2$.

We can modify the method in Section \ref{sec:MPE}\ for this setting. Let $%
m_{\tau t}$ denote the number of type-$\tau $ players choosing action $1$ in
period $t$. Then we can construct a $(n_{1}+1)$ -by-$(n_{2}+1)$ matrix of
the joint probability mass of $(m_{1t},m_{2t})$, denoted $\mathbf{S}$, with
the rows and columns indexed by the possible values of $m_{1t}$ and $m_{2t}$%
, and $(i,j)$-th element being probability for $m_{1t}=i+1$ and $m_{2t}=j+1$%
. Likewise we can construct a $\tilde{n}$-by-$\tilde{n}$ matrix for the
joint probability mass of $(m_{1t},m_{2t})$ and $(m_{1,t-1},m_{2,t+1})$,
denoted $\mathbf{\tilde{S}}$, where $\tilde{n}\equiv (n_{1}+1)(n_{2}+1)$.
Applying the law of total probability and exploiting the conditional
independence assumptions, we can decompose $\mathbf{S}$ and $\mathbf{\tilde{S%
}}$ both into products of three matrices in forms similar to (\ref{decomp1})
and (\ref{decomp2}). If $n_{1},n_{2}>MK-1$ and proper rank conditions hold,
then the ranks of $\mathbf{S}$ and $\mathbf{\tilde{S}}$ are equal to the
dimensions of diagonal matrices in the products, or $MK$ and $MK^{2}$
respectively. Thus the cardinality of equilibria in the DGP is identified.

\subsection{Extension: binary decisions with two players\label{sec:2p2d}}

The remaining challenge is to test for multiple MPE when there are only two
players and binary decisions $\#\mathcal{D}=2$ and $\#\mathcal{I}=2$. In
this case, we need to use five consecutive time periods. Let $W_{t}$ denote
a discretization of $(D_{t},X_{t})$ defined by partitioning the support of $%
X_{t}$ into intervals.

Consider the following joint probability mass function for a fixed
realization of the second-period choices and states $\bar{w}_{2}$:%
\begin{eqnarray*}
P(w_{3},\bar{w}_{2},w_{1}) &=&\sum\limits_{m,\xi _{2}}\varphi (m)\lambda (\xi
_{2})P_{m}(w_{3},\bar{w}_{2},w_{1}|\xi _{2}) \\
&=&\sum\limits_{m,\xi _{2}}\varphi (m)\lambda (\xi _{2})\frac{P_{m}(w_{3}|\bar{%
w }_{2},\xi _{2},w_{1})P_{m}(\bar{w}_{2},\xi _{2},w_{1})}{\lambda (\xi _{2})}
\\
&=&\sum\limits_{m,\xi _{2}}\varphi (m)P_{m}(w_{3}|\bar{w}_{2},\xi _{2})P_{m}( 
\bar{w}_{2},\xi _{2},w_{1})\text{,}
\end{eqnarray*}%
where $\varphi (\cdot )$ is the equilibrium selection; $\lambda (\cdot )$ is the
marginal probability mass function of $\xi _{2}$ under MPE $m\in \mathcal{M}$%
. The last equality above uses conditional independence conditions
maintained on the state and DUH transitions. Let $\delta _{(1)}$ and $\delta
_{(3)}$ denote the cardinality of the marginal support of $W_{1}$ and $W_{3}$
respectively.

Let $\mathbf{P}_{w_{3},\bar{w}_{2},w_{1}}$ denote a $\delta _{(3)}$-by-$%
\delta _{(1)}$ matrix that summarizes the joint probability mass function of 
$(W_{3},W_{2},W_{1})$ when the realization of $W_{2}$ is fixed at $\bar{w}_{2}$.
Let the rows in $\mathbf{P}_{w_{3},\bar{w}_{2},w_{1}}$ be indexed by
elements on the support of $W_{3}$ and the columns by elements on the
support of $W_{1}$. In matrix notation,%
\begin{equation*}
\mathbf{P}_{w_{3},\bar{w}_{2},w_{1}}\equiv \left( \Phi _{\bar{w}_{2}}\right)
\Pi \left( \Psi _{\bar{w}_{2}}\right) ^{\prime }
\end{equation*}%
where $\Pi $ is a $J$-by-$J$ diagonal matrix with non-zero diagonal entries
being $\{\varphi (m):m\in \mathcal{M}\}$ (each $\varphi (m)$ is repeated $K$ times
on the diagonal); $\Phi _{\bar{w}_{2}}$ is a $\delta _{(3)}$-by-$J$ matrix
with its $(i,j)$-th component being $P_{m}(w_{3}|\bar{w}_{2},\xi _{2})$
where $(m,\xi _{2})=\omega _{j}$ and $w_{3}$ is the $i$-th element in
support of $W_{3}$; and likewise $\Psi _{\bar{w}_{2}}$ is a $\delta _{(1)}$%
-by-$J$ matrix with its $(i,j)$-th component being $P_{m}(\bar{w}_{2},\xi
_{2},w_{1})$ where $(m,\xi _{2})=\omega _{j}$ and $w_{1}$ is the $i$-th
element in support of $W_{1}$. With $\delta _{(3)},\delta _{(1)}>J=MK$ and $%
\Phi _{\bar{w}_{2}}$, $\Psi _{\bar{w}_{2}}$ both having full rank, we
identify $J=MK$ as the rank of $\mathbf{P}_{w_{3},\bar{w}_{2},w_{1}}$.

Next, consider the following joint probability mass function for a fixed
pair of realization in the second and fourth period $(\bar{w}_{4},\bar{w}%
_{2})$:%
\begin{equation*}
P(w_{5},\bar{w}_{4},w_{3},\bar{w}_{2},w_{1})=\sum\limits_{m,\xi _{4,}\xi
_{2}}\varphi (m)\lambda _{m}(\xi _{4},\xi _{2})P_{m}(w_{5},\bar{w}_{4},w_{3},%
\bar{w}_{2},w_{1}|\xi _{4},\xi _{2})\text{,}
\end{equation*}%
where $\lambda _{m}(\cdot
,\cdot )$ is the joint distribution of $(\xi _{2},\xi _{4})$ under MPE $m\in 
\mathcal{M}$; and for each MPE $m$ and conditioning on $(\xi _{4},\xi _{2})$%
, the joint probability mass of the history is%
\begin{eqnarray}
&&P_{m}(w_{5},\bar{w}_{4},w_{3},\bar{w}_{2},w_{1}|\xi _{4},\xi _{2})
\label{T5} \\
&=&P_{m}(w_{5}|\bar{w}_{4},\xi _{4},w_{3},\bar{w}_{2},\xi _{2},w_{1})P_{m}(%
\bar{w}_{4},w_{3},\bar{w}_{2}|w_{1},\xi _{4},\xi _{2})P_{m}(w_{1}|\xi
_{4},\xi _{2})\text{.}  \notag
\end{eqnarray}%
By our maintained assumptions,%
\begin{equation*}
P_{m}(w_{5}|\bar{w}_{4},\xi _{4},w_{3},\bar{w}_{2},\xi
_{2},w_{1})=P_{m}(w_{5}|\bar{w}_{4},\xi _{4})\text{,}
\end{equation*}%
\begin{eqnarray*}
P_{m}(\bar{w}_{4},w_{3},\bar{w}_{2}|w_{1},\xi _{4},\xi _{2}) 
&=&\frac{P_{m}(\bar{w}_{4},\xi _{4}|w_{3},\bar{w}_{2},\xi
_{2},w_{1})P_{m}(w_{3}|\bar{w}_{2},\xi _{2},w_{1})P_{m}(\bar{w}_{2},\xi
_{2}|w_{1})}{P_{m}(\xi _{4},\xi _{2}|w_{1})} \\
&=&\frac{P_{m}(\bar{w}_{4},\xi _{4}|w_{3},\bar{w}_{2},\xi _{2})P_{m}(w_{3}| 
\bar{w}_{2},\xi _{2})P_{m}(\bar{w}_{2},\xi _{2}|w_{1})}{P_{m}(\xi _{4},\xi
_{2}|w_{1})}\text{,}
\end{eqnarray*}%
and%
\begin{equation*}
P_{m}(w_{1}|\xi _{4},\xi _{2})=\frac{P_{m}(\xi _{4},\xi _{2}|w_{1})P(w_{1})}{
\lambda _{m}(\xi _{2},\xi _{4})}\text{.}
\end{equation*}%
Substituting this into (\ref{T5}), we get 
\begin{eqnarray*}
&&P(w_{5},\bar{w}_{4},w_{3},\bar{w}_{2},w_{1}) \\
&=&\sum\limits_{m,\xi _{4,}\xi _{2}}\varphi (m)P_{m}(w_{5}|\bar{w}_{4},\xi
_{4})P_{m}(\bar{w}_{4},\xi _{4},w_{3}|\bar{w}_{2},\xi _{2})P_{m}(\bar{w}
_{2},\xi _{2},w_{1})\text{.}
\end{eqnarray*}%
Let $\delta _{(5,1)}$ denote the cardinality of the support of $%
(W_{5},W_{1}) $.

Let $\mathbf{\tilde{P}}_{w_{5},\bar{w}_{4},w_{3},\bar{w}_{2},w_{1}}$ denote
a $\delta _{(5,1)}$-by-$\delta _{(3)}$ matrix that summarizes the joint
probability mass function of $(W_{5},W_{4},W_{3},W_{2},W_{1})$ when $%
(W_{4},W_{2})$ are fixed at $(\bar{w}_{4},\bar{w}_{2})$. Let the rows in $%
\mathbf{\tilde{P}}_{w_{5},\bar{w}_{4},w_{3},\bar{w}_{2},w_{1}}$ be indexed
by elements on the joint support of $(W_{5},W_{1})$ and the columns by
elements on the marginal support of $W_{3}$. In matrix notation, 
\begin{equation*}
\mathbf{\tilde{P}}_{w_{5},\bar{w}_{4},w_{3},\bar{w}_{2},w_{1}}\equiv \left( 
\tilde{\Phi}_{\bar{w}_{4},\bar{w}_{2}}\right) \tilde{\Pi}\left( \tilde{\Psi}%
_{\bar{w}_{4},\bar{w}_{2}}\right) ^{\prime }
\end{equation*}%
where $\tilde{\Pi}$ is a $J^{\prime }$-by-$J^{\prime }$ diagonal matrix with
non-zero diagonal entries being $\{\varphi (m):m\in \mathcal{M}\}$ (each $\varphi
(m) $ is repeated $K^{2}$ times on the diagonal); $\tilde{\Phi}_{\bar{w}%
_{4}, \bar{w}_{2}}$ is a $\delta _{(5,1)}$-by-$J^{\prime }$ matrix with its $%
(i,j)$-th component being $P_{m}(w_{5}|\bar{w}_{4},\xi _{4})P_{m}(\bar{w}%
_{2},\xi _{2},w_{1})$ where $(m,\xi _{2},\xi _{4})=\tilde{\omega}_{j}$ and $%
(w_{5},w_{1})$ being the $i$-th element in support of $(W_{5},W_{1})$, and
likewise $\tilde{\Psi}_{\bar{w}_{4},\bar{w}_{2}}$ is a $\delta _{(3)}$-by-$%
J^{\prime }$ matrix with its $(i,j)$-th component being $P_{m}(\bar{w}%
_{4},\xi _{4},w_{3}|\bar{w}_{2},\xi _{2})$ where $(m,\xi _{2},\xi _{4})=%
\tilde{\omega}_{j}$ and $w_{3}$ is the $i$-th element in support of $W_{3}$.
With $\delta _{(5,1)},\delta _{(3)}>J^{\prime }$ and $\left( \tilde{\Phi}_{ 
\bar{w}_{4},\bar{w}_{2}}\right) $ and $\left( \tilde{\Psi}_{\bar{w}_{4},\bar{%
		w}_{2}}\right) $ being full rank, we identify $J^{\prime }=MK^{2}$ as the
rank of $\mathbf{\tilde{P}}_{w_{5},\bar{w}_{4},w_{3},\bar{w}_{2},w_{1}}$.

For this method to work, it is crucial that we partition the 
history into $(W_{1},W_{5})$ and $W_{3}$ while indexing the rows and columns 
$\mathbf{\tilde{P}}_{w_{5},\bar{w}_{4},w_{3},\bar{w}_{2},w_{1}}$. To see
how, suppose we had defined $\mathbf{\tilde{P}}_{w_{5},\bar{w}_{4},w_{3},%
\bar{w}_{2},w_{1}}$ differently as a $\delta _{(5,3)}$-by-$\delta _{(1)}$
matrix with rows indexing $(w_{5},w_{3})$ and columns index $w_{1}$. Then it
would have rank $J=MK$ as opposed to $J^{\prime }=MK^{2}$.

\subsection{Related literature}

\cite{otsu_et_al2016} tested the null hypothesis that
choices observed in a finite number of games/markets $s=1,2,...,S<\infty $
are generated by the same MPE. With the number of time periods $T$ in each
game approaching infinity, they proposed an asymptotic test based on the
estimation and comparison of conditional choice probabilities across the
finite number of games. In comparison, we focus on a different scenario
where a sample is large in the number of independent dynamic games/markets
(that is, asymptotics is defined as $S\rightarrow \infty $) while the number
of time periods $T$ observed in each game is small and finite (say $T\leq 3$%
).

\cite{huShum2012} used an eigendecomposition method to fully identify
general dynamic models with unobserved heterogeneity. \cite{luo_et_al2019} extended this method to deal with multiple equilibria in dynamic
games by treating equilibrium selection as a component in discrete unobserved
heterogeneity on the game level. The key in their paper was to define a set
of conditioning events so that the conditional distribution of observed
outcomes admitted a finite-mixture (or eigenvalue decomposition)
representation with $MK$ components, with $K$ being cardinality of
unobserved heterogeneity and $M$ being cardinality of equilibria in the DGP.
\cite{luo_et_al2019} then proposed to test multiple equilibria by
checking whether player payoffs identified from the component CCPs in the
finite mixture (which themselves need to be recovered from outcome
distributions via eigenvalue decomposition) are distinctively different.
This idea was built on the same insight that \cite{aguirregabiriaMira2019}
introduced for static games: if there are multiple equilibria in the DGP,
then player payoffs backed out from some of the component CCPs in the
mixture would be identical.\footnote{%
\cite{luo_et_al2019} described the test in sequential steps: (1)
recover component CCPs given DUH and equilibrium
selection using eigendecomposition, (2) estimate payoffs from component
CCPs, and (3) comparing payoff estimates calculated from different CCPs. The
costs for nonparametrically implementation of these steps are high. See our discussion about analogous costs for static games in the second paragraph of Section 2.4.
\cite{luo_et_al2019} did not provide any test statistic or asymptotic
theory for implementing this test in their paper.}

In contrast, the insight in our new approach is different. 
We separate the task of detecting multiple equilibria from identifying the full model. By partitioning and
pairing outcome histories with different lengths, we can represent their
joint distribution as finite mixtures whose cardinality of components 
\textit{vary with the length of history in addition to }$M$\textit{\ and }$K$%
. This is a new source of variation that has not been exploited in the
literature to the best of our knowledge. More importantly, it allows us to
conduct simple inference for equilibrium cardinality through standard rank
test such as \cite{kleibergenPaap2006}, thus dispensing with the sequential
steps discussed in \cite{luo_et_al2019}.\footnote{%
In our method, we only need to recover the ranks of the joint distribution
of outcome histories. As a result, we do not need to actually perform an
eigendecomposition. This means we do not need the finite mixture to
take a form of $Q\Lambda Q^{-1}$. Expressing the finite mixture in a much
less restrictive form $Q\Lambda \tilde{Q}$ is totally fine for our purpose.}

\subsection{Numerical examples}

\noindent \textbf{Example 1}. (Multiple players with binary choices.)
Consider a dynamic game of imperfect information between $n$ players. Each
player $i$ belongs to one of two types (labeled as 1 or 2) and makes a
binary decision $d_{i,t}$ in each period $t$. Same-type players have
identical ex post payoffs and have idiosyncratic shocks drawn independently
from the same distribution. Let $n_{1}$ and $n_{2}$ denote the number of
type-1 and type-2 players respectively.

A time-varying state variable $\xi _{t}\in \{0,1\}$ is commonly observed by
all players in each period $t$. In addition, each player observes a vector
of private signals in each period: $\varepsilon _{i,t}\equiv (\varepsilon
_{i,t,s})_{s=0,1}\in \mathbb{R}^{2}$. Let $\varepsilon _{t}\equiv
(\varepsilon _{i,t})_{i\leq n}$. The state-and-signal transition is such
that $\Pr \{\xi _{t+1},\varepsilon _{t+1}|\xi _{t},\varepsilon
_{t},d_{t}\}=\Pr \{\varepsilon _{t+1}\}\Pr \{\xi _{t+1}|\xi _{t},d_{t}\}$,
where $d_{t}\equiv (d_{i,t})_{i\leq n}$ and $\varepsilon _{i,t}$ are i.i.d.
across $i\leq n$ and independent form $\xi _{t}$. For each $i$ and $t$, $%
\varepsilon _{i,t}$ is bivariate normal with zero mean and an identity
covariance matrix. The law of transition for $\xi_t $ and the ex post payoffs
depend on $d_{t}$ only through the number of type-1 and type-2 players who
chooses $1$ in period $t$, denoted $m_{t}\equiv (m_{1t},m_{2t})$. The
transition $\Pr \{\xi _{t+1}=1|\xi _{t}=\xi ,m_{t}\}=p_{\xi ,1}$ if $%
m_{1t}/n_{1}\geq m_{2t}/n_{2}$; and $p_{\xi ,0}$ otherwise. The ex post
payoff of a type-$\tau $ player $i$ choosing $d_{i,t}=d_{i}\in \{0,1\}$ is
given by 
\begin{equation*}
\pi _{\tau }(d_{i},d_{-i,t},\xi _{t})+\varepsilon _{i,t,d_{i}}\text{,}
\end{equation*}%
where $\pi _{\tau }(d_{i},d_{-i,t},\xi )=c_{\tau }(d_{i},\xi )+\delta _{\tau
1}(d_{i})\left( m_{1t}/n_{1}\right) +\delta _{\tau 2}(d_{i})\left(
m_{2t}/n_{2}\right) $ for type $\tau =1,2$.

Consider a game with ten players with $n_{1}=n_{2}=5$ and a discount factor $%
\beta =0.75$. The other game parameters are specified as follows:%
\begin{equation*}
\begin{tabular}{|lllll|}
\hline
& \multicolumn{2}{c}{$c_{\tau =1}(d_{i},\xi )$} & \multicolumn{2}{c|}{$%
c_{\tau =2}(d_{i},\xi )$} \\ \hline
& $d_{i}=0$ & $d_{i}=1$ & $d_{i}=0$ & $d_{i}=1$ \\ \hline\hline
$\xi =0$ & $2.6$ & $-0.8$ & $1.4$ & $-1.8$ \\ 
$\xi =1$ & $1.4$ & $-1.8$ & $2.6$ & $-0.8$ \\ \hline
\end{tabular}%
\text{ ; }%
\begin{tabular}{|lllll|}
\hline
& \multicolumn{2}{c}{$\delta _{\tau =1,\tilde{\tau}}(d_{i})$} & 
\multicolumn{2}{c|}{$\delta _{\tau =2,\tilde{\tau}}(d_{i})$} \\ \hline
& $d_{i}=0$ & $d_{i}=1$ & $d_{i}=0$ & $d_{i}=1$ \\ \hline\hline
$\tilde{\tau}=1$ & $2.5$ & $4$ & $2.5$ & $6.5$ \\ 
$\tilde{\tau}=2$ & $2.5$ & $6.5$ & $2.5$ & $4$ \\ \hline
\end{tabular}%
\text{;}
\end{equation*}%
and 
\begin{equation*}
p_{0,0}=0.375\text{; }p_{0,1}=0.725\text{; }p_{1,0}=0.675\text{; }%
p_{1,1}=0.125\text{.}
\end{equation*}%
These parameter values reflect certain economic interpretation: First, for
different types of players, the marginal impacts of unobserved states $\xi $
on $c_{\tau }(d_{i},\xi )$ move in different directions. Second, when $%
d_{i}=1$, $\delta _{\tau ,\tilde{\tau}}$ reflects higher complementarity
from other types of players choosing the same action. Third, the transition
of states depends on the choice profiles summarized by ($m_{1t},m_{2,t}$),
which has a substantial impact on the likelihood of transition to higher
states. We also allow the sign of such impact depend on the current state as
well.

We solve for conditional choice probabilities in type-symmetric MPE. The
game admits two MPE's that lead to different vectors of CCPs:%
\begin{eqnarray*}
&&\text{Conditional Choice Probabilities (CCPs) in MPE} \\
&&\text{ \ \ \ \ \ \ \ }%
\begin{tabular}{lllll}
\hline
& \multicolumn{2}{c}{$Eq\#1$} & \multicolumn{2}{c}{$Eq\#2$} \\ 
& $\tau =1$ & \multicolumn{1}{l|}{$\tau =2$} & $\tau =1$ & $\tau =2$ \\ 
\hline\hline
$\xi =0$ & $0.473$ & \multicolumn{1}{l|}{$0.533$} & $0.041$ & $0.022$ \\ 
$\xi =1$ & $0.375$ & \multicolumn{1}{l|}{$0.309$} & $0.304$ & $0.334$ \\ 
\hline
\end{tabular}%
\end{eqnarray*}

To illustrate the method in Section \ref{sec:MPE}, suppose both equilibria
are selected with positive probability in the data generating process. With $%
A$ being the set of type-1 players and $B$ being the set of type-2 players,
the joint probability mass $\mathbf{P}$ in (\ref{decomp1}) is a $2^{5}$-by-$%
2^{5}$ with rank $MK=4$. Furthermore, the other joint probability mass $%
\mathbf{\tilde{P}}$ in (\ref{decomp2}) is $2^{10}$-by-$2^{10}$ matrix with
rank $MK^{2}=8$. It is verified that this holds because the rank conditions
in Assumptions \ref{assum:FR} and \ref{assum:FR2}\ are satisfied.

It is worth mentioning that this idea for testing multiple MPE remains valid
when applied to lower-dimension matrices that aggregate over the rows and
columns in $\mathbf{P}$ and $\mathbf{\tilde{P}}$. For instance, one may well
replace $A$ and $B$ by arbitrarily picked subsets of type-1 or type-2
players, say, three from each type. Then one can construct two joint prob
mass matrices similar to $\mathbf{P}$ and $\mathbf{\tilde{P}}$ for these
players only. These matrices will be $2^{3}$-by-$2^{3}$ and $2^{6}$-by-$%
2^{6} $ in dimensions, and have ranks $4$ and $8$ respectively. 

We also illustrate how to implement the alternative method in Section \ref%
{sec:symMPE}. To do so, construct a $6$-by-$6$ matrix of joint probability
masses $\mathbf{S}$ for $(m_{1t},m_{2t})$, i.e., the number of type-1 and
type-2 players choosing $1$. Likewise construct a $6^{2}$-by-$6^{2}$ matrix
of joint probability masses $\mathbf{\tilde{S}}$ for $m_{t}\equiv
(m_{1t},m_{2t})$ and $\tilde{m}_{t}\equiv (m_{1,t-1},m_{2,t+1})$. Both
matrices admit a diagonalized forms similar to (\ref{decomp1}) and (\ref%
{decomp2}), with the diagonal matrices in the middle of the decompositions
being the same as $\Lambda $ and $\tilde{\Lambda}$ in (\ref{decomp1}) and (%
\ref{decomp2}). It is verified that the outer matrices in both
decomponsitions, a.k.a. component mass functions conditional on equilibrium
selection and unobserved states, satisfy the appropriate rank conditions.
Therefore the probability mass matrices $\mathbf{S}$ and $\mathbf{\tilde{S}}$
have ranks $MK=4$ and $MK^{2}=8$ respectively. As noted, one
can also perform the test by replacing $\mathbf{\tilde{S}}$ with
lower-dimension transformation, such as $\tilde{n}$-by-$\tilde{n}$ summary
matrix (with $8<\tilde{n}<36$) formed by some linear combinations of rows
and columns in $\mathbf{\tilde{S}}$.

Given the equilibrium choice probabilities reported in Example 1A above, the
numeric values of the probability mass matrices are%
\begin{equation*}
\mathbf{S}=\left( 
\begin{array}{cccccc}
0.1588 & 0.0285 & 0.0152 & 0.0113 & 0.0053 & 0.0011 \\ 
0.0449 & 0.0390 & 0.0510 & 0.0443 & 0.0221 & 0.0047 \\ 
0.0176 & 0.0485 & 0.0770 & 0.0728 & 0.0381 & 0.0084 \\ 
0.0098 & 0.0351 & 0.0616 & 0.0618 & 0.0333 & 0.0074 \\ 
0.0034 & 0.0135 & 0.0256 & 0.0268 & 0.0147 & 0.0033 \\ 
0.0005 & 0.0022 & 0.0044 & 0.0047 & 0.0026 & 0.0006%
\end{array}%
\right) ;
\end{equation*}%
and%
\begin{equation*}
\mathbf{\tilde{S}}^{\ast }=\left( 
\begin{array}{ccccccccc}
0.0615 & 0.0249 & 0.0154 & 0.0339 & 0.0132 & 0.0105 & 0.0071 & 0.0011 & 
0.0008 \\ 
0.0272 & 0.0153 & 0.0177 & 0.0300 & 0.0108 & 0.0127 & 0.0072 & 0.0013 & 
0.0010 \\ 
0.0249 & 0.0134 & 0.0194 & 0.0332 & 0.0128 & 0.0179 & 0.0110 & 0.0020 & 
0.0017 \\ 
0.0407 & 0.0289 & 0.0443 & 0.0694 & 0.0250 & 0.0335 & 0.0188 & 0.0034 & 
0.0027 \\ 
0.0120 & 0.0116 & 0.0188 & 0.0273 & 0.0092 & 0.0114 & 0.0058 & 0.0010 & 
0.0007 \\ 
0.0147 & 0.0126 & 0.0203 & 0.0340 & 0.0132 & 0.0179 & 0.0111 & 0.0020 & 
0.0016 \\ 
0.0079 & 0.0067 & 0.0111 & 0.0162 & 0.0055 & 0.0074 & 0.0038 & 0.0007 & 
0.0005 \\ 
0.0011 & 0.0012 & 0.0019 & 0.0034 & 0.0014 & 0.0020 & 0.0013 & 0.0002 & 
0.0002 \\ 
0.0008 & 0.0007 & 0.0011 & 0.0020 & 0.0008 & 0.0012 & 0.0007 & 0.0001 & 
0.0001%
\end{array}%
\right)
\end{equation*}%
The rows and columns of $\mathbf{S}$ correspond to the number of type-1 and
type-2 players choosing $1$ respectively (ordered from $m=0,1,...,5$). The $%
9 $-by-$9$ matrix $\mathbf{\tilde{S}}^{\ast }$ is a coarsening of the
original $36$-by-$36$ matrix $\mathbf{\tilde{S}}$. It is constructed by
adding up every four adjacent rows and four adjacent columns in $\mathbf{%
\tilde{S}}$. The rank of $\mathbf{S}$ equals $4=MK$; the rank(s) of $\mathbf{%
\tilde{S}}$ and $\mathbf{\tilde{S}}^{\ast }$ are both equal to $8=MK^{2}$%
.\medskip

\noindent \textbf{Example 2.} (Two players with binary decisions.) Consider
a dynamic game between two forward-looking players $i$ and $j$ who make
dynamic, optimal binary decisions $d_{it},d_{jt}\in \{0,1\}$ in each period $%
t$. Players observe both states that evolve over time: $z_{t}\in \{0,1\}$
and $\xi _{t}\in \{-1,1\}$, with $z_{t}$ reported in the data while $\xi
_{t} $ is not. At time $t$, player $i$ observes a vector of private signals $%
\varepsilon _{it}\equiv (\varepsilon _{it1},\varepsilon _{it0})\in \mathbb{R}%
^{2}$. Let $\varepsilon _{t}\equiv (\varepsilon _{it},\varepsilon _{jt})$,
and $\varepsilon _{it},\varepsilon _{jt},z_{t}$ be independent conditional
on $\xi _{t}$. Let $s_{t}\equiv (z_{t},\xi _{t})$ denote the vector of
states, and its transition satisfies%
\begin{equation*}
\Pr (s_{t+1},\varepsilon _{t+1}|s_{t},\varepsilon _{t},d_{i,t},d_{j,t})=\Pr
(\varepsilon _{t+1}|s_{t+1})\Pr (s_{t+1}|s_{t},d_{i,t},d_{j,t})\text{,}
\end{equation*}%
Conditional on $\xi _{t}$, $\varepsilon _{it}$ and $\varepsilon _{jt}$ are
both bivariate normal with mean $(\xi _{t},0)$ and identity covariance. The
state transition is 
\begin{eqnarray*}
&&\Pr \{z_{t+1}=1,\xi _{t+1}|z_{t},\xi
_{t},d_{i,t},d_{j,t}\}=E(z_{t+1}|z_{t},\xi _{t},d_{it},d_{jt})\Pr (\xi
_{t+1}|\xi _{t})\text{;} \\
&&E\left( z_{t+1}|z_{t},\xi _{t},d_{it},d_{jt}\right) =\delta (z_{t},\xi
_{t},d_{it}+d_{jt})\text{.}
\end{eqnarray*}%
The evolution of 
$\xi _{t}$ is serially correlated, but does not depend on observed state or
choices $d_{it},d_{jt}$. In this model, dynamics exist because $%
d_{it},d_{jt} $ affect the transition of observed states $z_{t}$. Ex post
payoff of $i$ from $d_{i}=\tau $ in time $t$ is $\pi (\tau
,d_{j},z)+\varepsilon _{it\tau } $, where $\pi (d_{it},d_{jt},z_{t})=c_{z}/2$
if $d_{it}=d_{jt}$; $\pi (d_{i},d_{j},z)=\alpha c_{z}$ if $d_{it}>d_{jt}$;
and $\pi (d_{it},d_{jt},z_{t})=(1-\alpha )c_{z}$ otherwise.

The specification is chosen to mimic a stylized game of dynamic labor force
participation decisions by a married couple $i$ and $j$ in a household. In
this case, $z_{t}\in \{0,1\}$ reflects the level of household savings at
time $t$ (with $z_{t}=1$ denoting high savings); and $\xi _{t}$ is a series
of time-varying household shocks. Dynamics exist in because the transition
of savings depends on both husband and wife's decision to participate in the
labor force. On the other hand, the transition of shocks do not depend on
the couple's decisions. In each period, $c_{z}$ is the per-period income or
consumption budget, where $c_{1}>c_{0}$ reflect positive association with
savings. The share $\alpha \in (0,1)$ is the share claimed by household
members, which depends on their labor participation decisions.

We solve this game with a discount factor $\beta =0.7$, and parameter values
specified as follows: $\alpha =0.6$, $c_{0}=1.5$, $c_{1}=3$, $\Pr \{\xi
_{t+1}=-1|\xi _{t}=-1\}=0.5$, $\Pr \{\xi _{t+1}=-1|\xi _{t}=1\}=0.3$, and $%
E(z_{t+1}|z_{t},\xi _{t},d_{it},d_{jt})$ specified as%
\begin{equation*}
\begin{tabular}{ccccccc}
\hline
& \multicolumn{2}{c}{$d_{it}+d_{jt}=0$} & \multicolumn{2}{c}{$%
d_{it}+d_{jt}=1 $} & \multicolumn{2}{c}{$d_{it}+d_{jt}=2$} \\ 
& $\xi =-1$ & $\xi =1$ & $\xi =-1$ & $\xi =1$ & $\xi =-1$ & 
\multicolumn{1}{c|}{$\xi =1$} \\ \hline\hline
$z_{t}=0$ & $0.0625$ & $0.1875$ & $0.4375$ & $0.5625$ & $0.6875$ & $0.8125$
\\ 
$z_{t}=1$ & $0.1250$ & $0.25$ & $0.5$ & $0.625$ & $0.75$ & $0.8750$ \\ \hline
\end{tabular}%
\text{,}
\end{equation*}%
which, in the joint labor participation example above, suggests transition
to higher states (savings) is more likely when both players participate.

This specification leads to two symmetric MPE in which the conditional
choice probabilities for participation conditional on common states are:%
\begin{equation*}
\begin{tabular}{ccccc}
\hline
& \multicolumn{2}{c}{$MPE\#1$} & \multicolumn{2}{c}{$MPE\#2$} \\ 
& $\xi =-1$ & $\xi =1$ & $\xi =-1$ & $\xi =1$ \\ \hline\hline
$z_{t}=0$ & $0.272$ & $0.786$ & $0.310$ & $0.826$ \\ 
$z_{t}=1$ & $0.216$ & $0.822$ & $0.347$ & $0.852$ \\ \hline
\end{tabular}%
\text{.}
\end{equation*}%
Plugging in these numbers in the definition of $\mathbf{P}_{w_{3},\bar{w}%
_{2},w_{1}}$ and $\mathbf{\tilde{P}}_{w_{5},\bar{w}_{4},w_{3},\bar{w}%
_{2},w_{1}}$ in Section \ref{sec:2p2d}\ verifies the rank conditions for all
realization of $\bar{w}_{2},\bar{w}_{4}$.

Consider $P_{w_{3},\bar{w}_{2},w_{1}}$. The value of $\bar{w}_{2}\ $is $\bar{%
z}_{2}=1$, $\bar{d}_{i2}=1$ $\bar{d}_{j2}=0$. The dimension of the matrix is 
$8${-by-}$8$, where the rows and columns correspond to a specific action and
observed state in period 3 and period 1 respectively. See below the matrix, $%
P_{w_{3},\bar{w}_{2},w_{1}}$ when the equilibrium selection probability is $%
0.4$. Its rank is equal to $4=MK$. 

\bigskip 

\small
\noindent

\begin{tabular}{c|cccccccc}
 & \multicolumn{8}{c}{$[z_1,d_{i1},d_{j1}]$} \\
$[z_3,d_{i3},d_{j3}]$ & $ [0,0,0]$ & $[0,1,0]$ & $[0,0,1]$ & $[0,1,1]$ & $[1,0,0]$ & $[1,1,0]$ & $[1,0,1]$ & $[1,1,1]$ \\ \hline $[0,0,0]$ & 0.0027 & 0.0103 & 0.0106 & 0.0266 & 0.0038 & 0.0086 & 0.0106 & 0.0261 \\ $[0,1,0]$ & 0.0019 & 0.0077 & 0.0079 & 0.0205 & 0.0027 & 0.0065 & 0.0077 & 0.0200 \\ $[0,0,1]$ & 0.0020 & 0.0079 & 0.0082 & 0.0212 & 0.0028 & 0.0066 & 0.0081 & 0.0208 \\ $[0,1,1]$ & 0.0047 & 0.0190 & 0.0195 & 0.0526 & 0.0066 & 0.0160 & 0.0192 & 0.0515 \\ $[1,0,0]$ & 0.0027 & 0.0107 & 0.0110 & 0.0290 & 0.0038 & 0.0089 & 0.0110 & 0.0284 \\ $[1,1,0]$ & 0.0020 & 0.0082 & 0.0084 & 0.0231 & 0.0028 & 0.0071 & 0.0081 & 0.0225 \\ $[1,0,1]$ & 0.0025 & 0.0099 & 0.0102 & 0.0277 & 0.0035 & 0.0081 & 0.0101 & 0.0271 \\ $[1,1,1]$ & 0.0065 & 0.0266 & 0.0274 & 0.0776 & 0.0090 & 0.0222 & 0.0268 & 0.0758 \\ \hline 
\end{tabular}

\normalsize

\bigskip 

Next, consider $\tilde{P}_{w_{5},\bar{w}_{4},w_{3},\bar{w}_{2},w_{1}}$. The
value of $\bar{w}_{2}\ $and $\bar{w}_{4}$ are $\bar{z}_{2}=1$, $\bar{d}%
_{i2}=1$ $\bar{d}_{j2}=0$ and  $\bar{z}_{4}=1$, $\bar{d}_{i4}=1$ $\bar{d}%
_{j4}=0$ respectively. The original dimension of the matrix is $64${-by-}$8$%
, with the rows correspond to a specific action and observed state in period
1 and period 5 and the columns correspond to a specific action and observed
state in period 3. As noted above, to reduce dimension, we collapsed $\tilde{%
P}_{w_{5},\bar{w}_{4},w_{3},\bar{w}_{2},w_{1}}$ into a $8${-by-}$8$ matrix,
by adding the rows that share the same ${d}_{i5},{z}_{5}${\ and }${d}_{j1}$.
See below the collapsed matrix $\tilde{P}_{w_{5},\bar{w}_{4},w_{3},\bar{w}%
_{2},w_{1}}$, when the equilibrium selection probability is $0.4$, and
hence. Its rank is equal to $8=MK^{2}$. \bigskip

\small
\noindent

\begin{tabular}{c|cccccccc} 
 & \multicolumn{8}{c}{$[z_3,d_{i3},d_{j3}]$} \\
$[d_{j1},z_5,d_{i5}]$ & $ [0,0,0]$ & $[0,1,0]$ & $[0,0,1]$ & $[0,1,1]$ & $[1,0,0]$ & $[1,1,0]$ & $[1,0,1]$ & $[1,1,1]$ \\ \hline $[0,0,0]$ & 0.0006 & 0.0029 & 0.0030 & 0.0106 & 0.0012 & 0.0037 & 0.0041 & 0.0159 \\ $[0,0,1]$ & 0.0009 & 0.0044 & 0.0045 & 0.0167 & 0.0017 & 0.0056 & 0.0062 & 0.0248 \\ $[0,1,0]$ & 0.0007 & 0.0033 & 0.0034 & 0.0126 & 0.0013 & 0.0041 & 0.0047 & 0.0187 \\ $[0,1,1]$ & 0.0012 & 0.0059 & 0.0061 & 0.0233 & 0.0022 & 0.0075 & 0.0082 & 0.0346 \\ $[1,0,0]$ & 0.0018 & 0.0087 & 0.0089 & 0.0324 & 0.0036 & 0.0112 & 0.0128 & 0.0505 \\ $[1,0,1]$ & 0.0026 & 0.0131 & 0.0134 & 0.0506 & 0.0052 & 0.0170 & 0.0191 & 0.0789 \\ $[1,1,0]$ & 0.0020 & 0.0099 & 0.0102 & 0.0383 & 0.0039 & 0.0126 & 0.0146 & 0.0599 \\ $[1,1,1]$ & 0.0034 & 0.0175 & 0.0180 & 0.0706 & 0.0066 & 0.0228 & 0.0255 & 0.1100 \\ \hline 
\end{tabular}

\bigskip 

\normalsize

\section{Concluding Remarks}

This article studies testable implications of multiple equilibria in discrete games with incomplete information where players'
private signals are allowed to be correlated. 
In static games, independence across games whose equilibrium selection is correlated can be used to overcome difficulties in testing for multiple equilibria. 
In dynamic games with serially correlated discrete unobserved heterogeneity, the distribution of an observed history (sequence) of choices and states is a finite mixture over equilibria and unobserved heterogeneity.  
Our testable implication exploits the fact that the number of mixture components is a \textit{known} function of the horizon of the history (length of the sequence), as well as the cardinality of equilibria and unobserved heterogeneity support.  
In both static and dynamic cases, the testable implications are conducive to formal tests using existing statistical tools. 

Further connections between these strategies might allow us to relax the independence across games used in the testable implications of static games if private types are correlated across games through DUH. It is possible to employ the ideas in the dynamic settings we considered, in which the DUH is serially correlated. Exploiting the two jointly might allow for even more powerful detection of multiplicity, but would also require new testing protocols for which further research is needed.

\bibliographystyle{ecta}
\bibliography{dePaula_Tang_2020}

\end{document}